\renewcommand{\Comment}[2][.6\linewidth]{%
  \leavevmode\hfill\makebox[#1][l]{//~#2}}
\definecolor{darkblue}{RGB}{0,0,139}
\definecolor{darkgreen}{RGB}{0,139,0}
\definecolor{darkred}{RGB}{139,0,0}
\definecolor{sprgreen}{RGB}{0,250,154}
\definecolor{viol}{RGB}{199,21,133}
\definecolor{royblue}{RGB}{65,105,225}
\definecolor{navy}{RGB}{102,153,255}
\definecolor{tuerkis}{RGB}{51,153,204}
\theoremstyle{plain}
\newtheorem{theorem}{Theorem}[section]
\newtheorem{lemma}[theorem]{Lemma}
\newtheorem{proposition}[theorem]{Proposition}
\theoremstyle{definition}
\newtheorem{remark}[theorem]{Remark}
\newcommand{\R}{\mathbb{R}}
\newcommand{\N}{\mathbb{N}}
\newcommand{\f}[2]{\frac{#1}{#2}}
\newcommand{\half}{\f{1}{2}}
\renewcommand{\d}{\operatorname{d\!}} 
\newcommand{\isdef}{\mathrel{\mathrel{\mathop:}=}}
\newcommand{\bary}{\overline{y}}
\newcommand{\barY}{\overline{Y}}
\newcommand{\barD}{\overline{D}}
\newcommand{\barz}{\overline{z}}
\newcommand{\abs}[1]{\left\lvert#1\right\rvert} 
\newcommand{\ip}[2]{\left\langle #1\,,#2\right\rangle} 
\DeclareMathOperator{\diag}{diag}
\begin{document}

\author[C. Bayer]{Christian Bayer}
\address{Weierstrass Institute, Mohrenstrasse 39, 10117 Berlin, Germany}
\email{christian.bayer@wias-berlin.de}

\author[M. Siebenmorgen]{Markus Siebenmorgen}
\address{Institute for Numerical Simulation, University of Bonn, Wegelerstr. 6, 53115 Bonn, Germany}
\email{siebenmo@ins.uni-bonn.de}

\author[R. Tempone]{Raul Tempone}
\address{CEMSE, King Abdullah University of Science and Technology (KAUST), Thuwal 23955-6900, Saudi Arabia}
\email{raul.tempone@kaust.edu.sa}
\thanks{R.~Tempone is a member of the KAUST Strategic Research Initiative,
  Center for Uncertainty Quantification in Computational Sciences and
  Engineering.}

\title{Smoothing the payoff for efficient computation of Basket option
  prices} 

\thanks{We are grateful to Juho H\"{a}pp\"{o}l\"{a} for pointing out the
  immediate applicability of our method to Variance-Gamma processes.}

\keywords{
  Computational Finance,
  European Option Pricing,
  Multivariate approximation and integration,
  Sparse grids,
  Stochastic Collocation methods, Monte Carlo
  and Quasi Monte Carlo methods.
}

\subjclass[2010]{Primary: 91G60; Secondary: 65D30, 65C20}

\begin{abstract}
  We consider the problem of pricing basket options in a multivariate
  Black-Scholes or Variance-Gamma model. From a numerical point of view,
  pricing such options corresponds to moderate and high-dimensional numerical
  integration problems with non-smooth integrands. Due to this lack of
  regularity, higher order numerical integration techniques may not be
  directly available, requiring the use of methods like Monte Carlo
  specifically designed to work for non-regular problems. We propose to use
  the inherent smoothing property of the density of the underlying in the
  above models to mollify the payoff function by means of an exact conditional
  expectation. The resulting conditional expectation is unbiased and yields a
  smooth integrand, which is amenable to the efficient use of adaptive
  sparse-grid cubature. Numerical examples indicate that the high-order method
  may perform orders of magnitude faster than Monte Carlo or Quasi Monte Carlo
  methods in dimensions up to $35$.
\end{abstract}

\maketitle
  
\section{Introduction}
\label{sec:introduction}

In quantitative finance, the price of an option on an underlying $S$ can
typically---disregarding discounting---be expressed as $E[f(S)]$ for some
(payoff) function $f$ on $S$ and the expectation operator $E$ induced by the
appropriate pricing measure. Hence, option pricing is an integration
problem. The integration problem is usually challenging due to a combination
of two complications:
\begin{itemize}
\item $S$ often takes values in a high-dimensional space. The reason for the
  high dimensionality may be time discretization of a stochastic differential
  equation, path dependence of the option (i.e., $S$ is actually a path of an
  asset price, not the value at a specific time), a large number of underlying
  assets, or others.
\item the payoff function $f$ is typically not smooth.
\end{itemize}
In this work, we focus on the problem of pricing basket options in models,
where the distribution of the underlying is explicitly given to
us. Specifically, we consider multivariate Black-Scholes and Variance-Gamma
models, i.e., models, for which no time discretization is required. We
consider a basket option on a $d$-dimensional underlying asset $S_T =
\left(S^1_T, \ldots, S^d_T \right)$ with payoff function
\begin{equation*}
  f(S_T) = \left( \sum_{i=1}^d w_i S^i_T - K \right)^+
\end{equation*}
for some positive weights $w_1, \ldots, w_d$, a maturity $T$ and a strike
price $K$. Observe in passing that one could also allow some weights to be
negative, an option type known as ``spread option''. Note that in addition,
(discrete) Asian options also fall under this framework.

Even in the standard Black-Scholes framework, closed-form expressions for
basket option prices are not available, since sums of log-normal random
variables are generally not log-normally distributed. Some explicit
approximation formulas are based on approximate distributional identities of
sums of log-normal random variables; for instance, see
\cite{Duf04,KdKKM04}. In addition, Laplace's method, possibly coupled with
heat kernel expansions when the distribution of the factors $S^i_T$ are given
only as solutions of stochastic differential equations, has been shown to
yield highly exact results even in high dimensions
(\cite{BL14,BFL15,BL15}). In this work, however, we aim to solve the problem
at hand using generic numerical integration techniques, which remain available
beyond the restrictions of the previous methods.

Efficient numerical integration algorithms are even available in high
dimensions, but they usually require smoothness of the integrand. Hence, they
are a priori not applicable in many option pricing problems. We will
specifically focus on (adaptive) sparse-grid methods, see
e.g.~\cite{BG04,GG03}.

Another efficient numerical integration technique is Quasi Monte Carlo
(QMC). Formally, QMC methods also rely on smoothness of the integrand to
retain first order convergence (up to multiplicative logarithmic terms);
however, QMC methods typically work very well for integration problems in
quantitative finance, even when the theoretically required regularity of the
integrand is not satisfied (see \cite{LE09} for an overview). In a series of
works, Griebel, {Kuo} and Sloan \cite{GKS10,GKS14,GKS15} analyzed the
performance of QMC methods for typical option pricing problems based on the
ANOVA decomposition. In particular, they show that all terms of the ANOVA
decomposition are smooth except for the last one. In the context of barrier
options, Achtsis, Cools and Nuyens \cite{ACN13a,ACN13b} successfully applied
QMC using a conditional sampling strategy to fulfill the barrier
conditions. Moreover, they use a root-finding procedure to determine the
region where the payoff function of the option is positive. In other words,
this procedure, which {is similar to the one discussed}
in~\cite{G07,H11}, locates the non-smooth part of the payoff function. Note
that the boundary of the support of the payoff function may be quite
complicated in terms of the coordinates for the integration problem, an issue
that may limit the applicability of such an approach.

From a numerical analysis point of view, the most obvious solution to the
problem is to smoothen the integrand using standard mollifiers, and there is a
prominent history of successful application of mollification in quantitative
finance; for instance, see \cite{FLLLT99} in the context of computing
sensitivities of option prices. For many financial applications, there seems
to be a more attractive approach that avoids the balancing act between
providing the smoothness needed for the numerical integration algorithm and
introducing bias in the integrand. Indeed, we suggest using the smoothing
property of the distribution of the underlying itself for regularizing the
integrand. This technique is quite standard in a time-stepping setting, and we
indeed plan to explore {its} applicability in that context in the future.

In this work, however, the regularization will be achieved by integrating
against one factor of the multivariate geometric Brownian motion
first---conditioning on all the other factors. More specifically, we show in
Section~\ref{sec:smoothing-payoff} below that we can always decompose
\begin{equation*}
  \sum_{i=1}^d w_i S^i_T \overset{\mathcal{L}}{=} H e^{Y}
\end{equation*}
for two independent random variables $H$ and $Y$---{here,
  ``$\overset{\mathcal{L}}{=}$'' denotes equality in law}. For the precise, explicit
construction see Lemma~\ref{lem:BS-formula} together with
Lemma~\ref{lem:rank-reduction}. Here, the random variable $Y$ is normally
distributed. Therefore, by computing the conditional expectation given $H$,
the basket option valuation problem is reduced to an integration problem in
$H$ (corresponding to an integration in $\R^{d-1}$) with a payoff function
given in this case by the Black-Scholes formula, a smooth function. 

The idea of integrating out one factor first, thereby obtaining an ``option''
on the remaining factors with payoff function giving by the Black-Scholes
formula is not new in finance. For instance, Romano and Touzi~\cite{RT97} have
applied this idea in a theoretical study of stochastic volatility models as a
tool to show convexity of admissible prices; in this vein, also see the work
\cite{G06}. The above mentioned decomposition (allowing the use of this trick
in the basket option context), however, seems new. As conditional expectations
always reduce the variance of a random variable, this trick can also be useful
in a Monte Carlo setting as well. {In this sense, the method
  is similar to the one proposed in~\cite{ACN13a,ACN13b}, who also reduce the
  variance of a (Q)MC estimator of a barrier option prices by clever
  transformation of the integrand coupled with identification of the region of
positive payoff values in terms of the integration variables. The approach
presented here is different since we really focus on the smoothing aspect
(obtaining lower variance as a welcome by-product), whereas the previous
approach is really focused on the variance, obtaining a smoother integrand as
a by-product. And indeed, even if applied to the special case of basket
options, the method of~\cite{ACN13a,ACN13b} will give a different result.}

{
We note in passing that the dependence of the convergence rate of our methodology is problem dependent. Indeed, the convergence rate depends both on the effective smoothing that the conditional expectation step introduces and the effective dimension of the resulting $d-1$ dimensional integration problem.  As an initial step towards a quantitative understanding of this dependence, Section \ref{sec:smoothing-payoff}  includes a  lower bound estimate on the effect of the smoothing. The development of more precise estimates are out of the scope of this work. 
}

Note that the smoothing approach proposed in this work can be applied in a
more general manner, possibly in modified ways, including more complicated
models, where the asset price process can only be simulated by a time-stepping
procedure. {We come back to this idea in the Conclusions.}

\subsection*{Outline}
\label{sec:outline}

We start by describing the setting of the problem in more detail. In
Section~\ref{sec:remind-effic-multi} we recall two popular efficient numerical
integration techniques for high dimensions, namely (adaptive) sparse-grids and
QMC. Then, in Section~\ref{sec:smoothing-payoff}, we describe the smoothing of
the payoff in the multivariate Black-Scholes framework.  Confirming the
exploratory style of this work, we give two detailed numerical examples. In
Section~\ref{sec:mult-black-schol}, we present numerical results for the
multivariate Black-Scholes model, and in
Section~\ref{sec:numerical-example-2}, we consider a multivariate
Variance-Gamma model, indicating that the smoothing method proposed here is
applicable beyond the standard Black-Scholes regime. Afterwards, we present
some concluding remarks including an outlook on future research.

\subsection*{Setting}
\label{sec:setting}

We consider a European basket option in a Black-Scholes model. More
specifically, we assume that the interest rate $r = 0$ -- i.e., we are working
with forward prices. We consider $d \in \N$ assets with prices $S_t =
\left(S^1_t, \ldots, S^d_t \right)$, $t > 0$, with risk-neutral dynamics
\begin{equation}
  \label{eq:BS-SDE}
  dS^i_t = \sigma_i S^i_t dW^i_t, \quad i=1, \ldots, d, 
\end{equation}
for volatilities $\sigma_i > 0$, $i = 1, \ldots, d$, driven by a correlated
$d$-dimensional Brownian motion $W$ with
\begin{equation*}
  d\ip{W^i}{W^j}_t = \rho_{i,j} dt, \quad i,j = 1, \ldots, d.
\end{equation*}
Obviously, \eqref{eq:BS-SDE} has the explicit solution
\begin{equation}
  \label{eq:GBM}
  S^i_t = S^i_0 \exp\left( - \half \sigma^2_i t + \sigma_i W^i_t \right),
  \quad i=1, \ldots, d,\ t > 0.
\end{equation}
We note that the components of the random vector $S_t$ have log-normal
distributions and are correlated.

A basket option is an option on such a collection of assets. We assume a
standard call option with strike $K > 0$ and maturity $T > 0$ with price
\begin{equation}
  \label{eq:basket-option}
  C_{\mathcal{B}} \coloneqq E\left[ \left( \sum_{i=1}^d c_i S^i_T - K\right)^+
  \right]. 
\end{equation}
Let us next transform the pricing problem~\eqref{eq:basket-option} into a
slightly more abstract form. As already observed, the random vector $\left(c_1
S^1_T, \ldots, c_d S^d_T\right)$ can be represented as $\left(w_1 e^{X_1},
\ldots, w_d e^{X_d} \right)$ for scalars $w_1, \ldots, w_d$ and a zero-mean
Gaussian vector $X = (X_1, \ldots, X_d) \sim \mathcal{N}(0, \Sigma)$. Indeed,
we may choose
\begin{gather*}
  w_i = c_i S^i_0 e^{- \half \sigma_i^2 T}, \quad i=1, \ldots, d, \\
  \Sigma_{i,j} = \sigma_i \sigma_j \rho_{i,j} T, \quad i,j = 1, \ldots, d.
\end{gather*}
Therefore, we are left with the problem of computing
\begin{equation}
  \label{eq:call-gauss}
  E\left[ \left( \sum_{i=1}^d w_i e^{X_i} - K \right)^+ \right]
\end{equation}
for $X \sim \mathcal{N}(0, \Sigma)$ and $d\ge1$.

\begin{remark}
  Note that the problem of computing the price of a (discretely monitored)
  Asian option on a 1D Black-Scholes asset is of the
  form~\eqref{eq:call-gauss} as well, but with a different covariance matrix
  $\Sigma$.
\end{remark}

In Section~\ref{sec:numerical-example-2}, we will also consider a
Variance-Gamma model; see \cite{MCC98} for the univariate and \cite{LS15} for
the multivariate Variance-Gamma model. We first recall the univariate case: Let
\begin{equation}
  \label{eq:var-gamma-logprice}
  X_t \coloneqq \theta {\gamma_t} + \sigma W_{\gamma_t}
\end{equation}
for {a real parameter $\theta$ (allowing control of the
  skewness),} a standard Brownian motion $W$ and an independent $\Gamma$
process $\gamma_t$ with parameters $1$ and $\nu > 0$ (i.e., $\gamma$ is a
process with stationary, independent increments with $\gamma_{t+h} - \gamma_t$
$\Gamma$-distributed with mean $h$ and variance $\nu h$, for any $h > 0$,
$t > 0$). Additionally, we impose $\gamma_0 = 0$. Under the risk-neutral
measure with $r=0$ (for simplicity), we then consider the asset price process
\begin{equation}
  \label{eq:var-gamma-price}
  S_t = S_0 \exp\left( \omega t + X_t \right), \quad \omega = \f{\log(1 -
    \theta \nu - \sigma^2 \nu /2)}{\nu};
\end{equation}
see \cite[formula (22)]{MCC98}. {The above choice of ``drift''
$\omega$ ensures that $S$ is a martingale.} Notice that the process $X$ is a L\'{e}vy
process and can alternatively be described as the difference of two
independent $\Gamma$ processes.

Economically, the time change $\gamma$ is often interpreted as ``business'' or
``trading'' time. Hence, it makes sense to assume that different stocks are
subject to a single time change. A reasonable multivariate generalization of
the Variance-Gamma model (also adopted in \cite{LS15}) requires defining
log terms $X^i_t$ as in \eqref{eq:var-gamma-logprice} based on
\emph{correlated} Brownian motions $W^i_t$, parameters $\theta_i$, $\sigma_i$,
but a common $\Gamma$-process $\gamma_t$ (hence, with a fixed parameter
$\nu$). The stock price components $S^i_t$, $i=1, \ldots, d$, are then defined
according to \eqref{eq:var-gamma-price} based on $X^i_t$, $\theta_i$,
$\sigma_i$, but the common parameter $\nu$.

\section{{A brief overview of efficient multi-dimensional numerical integration}}
\label{sec:remind-effic-multi}

In this section, we give a brief review on efficient multidimensional
integration schemes, in particular the Monte Carlo quadrature, the QMC
quadrature and the adaptive sparse-grid quadrature. To this end, let us
consider a function \(f\colon \R^d\to \R\) and denote the \(d\)-dimensional
standard Gaussian density function by \(\phi_d\colon \R^d \to \R_+,\ x\mapsto
(2\pi)^{-d/2}\prod_{k=1}^d \exp(-x_k^2/2)\). As we will see later on, the
multi-dimensional integration problem that we are faced with is to find an
approximation to the integral
\begin{equation}\label{eq:multiint}
\int_{\R^d} f(x)\phi_d(x) \d x.
\end{equation}

\subsection{Monte Carlo and Quasi Monte Carlo quadrature}

The most widely used quadrature technique to tackle high-dimensional
integration problems is the Monte Carlo quadrature; for example, see
\cite{HH64}.  This quadrature draws \(N\in \N\) independently and identically
distributed samples \(\xi_i\in\R^d,\ i=1,\ldots,N\) with respect to the
\(d\)-dimensional standard normal distribution. Then, the unbiased Monte Carlo
estimator for the integral \eqref{eq:multiint} is given by
\begin{equation}\label{eq:MCest}
\int_{\R^d} f(x)\phi_d(x) \d x\approx\frac 1 N \sum_{i=1}^N f(\xi_i).
\end{equation}
The big advantage of this quadrature is that {the root mean 
square error } converges with a rate that is
independent of the dimensionality \(d\), but the convergence rate
\(\mathcal{O}(N^{-1/2})\) is rather low. Another advantage of this quadrature
is that it works under low regularity requirements on the integrand. To be
more precise, the variance of the integrand is a multiplicative constant in
the error estimate.

The QMC quadrature is of the same form \eqref{eq:MCest} as the MC quadrature,
but the sample points \(x_i\) are constructed or taken from a prescribed
sequence rather than chosen randomly.  There are several QMC sequences
available in the literature, see e.g.~\cite{Caf98,Nie92} {or
  \cite{dick2013high} for a recent review article}. Nevertheless, almost
all QMC sequences refer to integration over the unit cube \([0,1]^d\) with
respect to the Lebesgue measure and, hence, these points have to be mapped to
the domain of integration {\(\R^d\)} by the inverse normal distribution.  The
aim of a QMC sequence is to mirror with the first \(N\) sample points the
uniform distribution on the unit cube as accurately as possible. A measure of
the distance between the uniform distribution and the first \(N\) sample
points is then given by the discrepancy of these sample points (see
\cite{Nie92}). This is because the QMC integration error for functions with
bounded variation in the sense of Hardy and Krause can be estimated up to a
constant by the discrepancy of the integration points.  A QMC sequence is
called a low-discrepancy sequence if the discrepancy of the first \(N\) points
of this sequence is \(\mathcal{O}(N^{-1}\log(N)^{d})\).  Thus, low-discrepancy
sequences can improve the convergence of the Monte-Carlo quadrature. In our
numerical examples, we will use the QMC quadrature based on the
Sobol-sequence, cf.~\cite{Sob67}, which is a classic low-discrepancy sequence.

{It should be noted that modern applications of QMC methods,
  in particular in finance, have moved away from the classical, deterministic
  low discrepancy sequences mentioned above. Instead, \emph{randomized}
  sequences are usually applied, which provide both the speed of convergence
  of classical QMC and the simple yet accurate error control provided by MC
  methods. We refer once again to \cite{dick2013high} for a general overview
  and to \cite{LE09} for a specific review for financial applications. While
  extremely important in general, we completely ignore the issue of error
  control in this work, instead concentrating on ``raw performance''.}

\subsection{Adaptive sparse-grid quadrature}
\label{sec:sparsegridconstr}

The construction of a sparse-grid quadrature is based on a sequence of 1D
quadrature rules (cf.~\cite{BG04,Smo63}). Hence, we define for a function
\(f\colon\R\to \R\) quadrature rules
\begin{equation}
  \label{eq:interpp}
  \int_{\R} f(x) \phi_1(x) \d x \approx Q_j(f) = \sum_{i=1}^{N_j} w_{i}^{(j)}
  f\left(\eta_{i}^{(j)}\right),\quad N_j\in\mathbb{N},\quad j=0,1,\ldots 
\end{equation}
with suitable quadrature points and weights
\(\left\{\left(\eta_{i}^{(j)},w_{i}^{(j)}\right)\right\}_{i=1}^{N_j}\subset\R\times\R\).
Usually, the sequence of quadrature rules is increasing (i.e.,~\(N_0< N_1 <
\ldots\)) and the first quadrature rule uses only one quadrature point and
weight (i.e.,~\(N_0=1\)).  According to the sequence \(\{Q_j\}_j\), we
introduce the difference quadrature operator
\begin{equation}\label{eq:deltaop}
\Delta_j\isdef Q_{j}-Q_{j-1},\quad\text{where}\quad 
Q_{-1}\isdef 0.
\end{equation} 

Assume that the sequence \(\{Q_jf\}_j\) converges, that is
\[
\int_{\R} f(x) \phi_1(x) \d x = \lim_{n\to \infty} Q_n f =
\lim_{n\to\infty}\sum_{j=0}^n \Delta_j f.
\]
This implies that the sequence \(\{|\Delta_j f|\}_j\) converges to zero and
hence the importance of the difference of the quadrature operators decays in
\(j\). Unfortunately, this decay is not necessarily monotonic, but it builds
the basic idea of adaptive sparse-grid constructions.

With the difference quadrature operators \(\Delta_j\) at hand, a generalized
sparse-grid quadrature for the integration problem \eqref{eq:multiint} is
defined by
\begin{equation}
\label{eq:gensg}
\int_{\R^d} f(x)\phi_d(x) \d x\approx \sum_{\alpha\in \mathcal{I}} 
\Delta_{\alpha} f\isdef
\sum_{\alpha\in \mathcal{I}} 
\Delta_{\alpha_1}\otimes \Delta_{\alpha_2}\otimes \cdots\otimes \Delta_{\alpha_d} f  
\end{equation}
for an admissible index set \(\mathcal{I}\subset \N_0^d\). Such an index set
\(\mathcal{I}\) is called admissible if it holds for \(j=1,\ldots,n\) and the
unit multi-index \(e_j\) that
\[
\alpha \in \mathcal{I} \Longrightarrow \alpha-e_j \in \mathcal{I} \quad\text{if \(\alpha_j>0\).}
\]
As can be seen from \eqref{eq:deltaop} and \eqref{eq:gensg}, a generalized
sparse-grid quadrature is uniquely determined by a sequence of univariate
quadrature rules \(\{Q_j\}_j\) and an admissible index set \(\mathcal{I}\).
The index set \(\mathcal{I}\) can be chosen a priori, for example as
\begin{equation}
\label{eq:classsg}
\mathcal{I} = \left\{\alpha\in \N_0^d: \sum_{i=1}^n \alpha_i \le q\right\}
\end{equation}
which corresponds to a total-degree sparse-grid on level \(q\). 

Another option is to {adaptively expand the index set} \(\mathcal{I}\).  In this case an
initial index set is selected, most often \(\mathcal{I} =
\{(0,\ldots,0)\}\).  Then, the integration error of the sparse-grid quadrature
with respect to \(\mathcal{I}\) is estimated by a local error estimator and,
afterwards, the indices with the largest local error estimator are
successively added to \(\mathcal{I}\) until a global error estimator
\(\eta\) has reached a certain tolerance. We denote the local error estimator
of an index \(\alpha\in\mathcal{I}\) by \(g_\alpha\) and for our purpose
we use the absolute value of the associated difference quadrature formula
(i.e., \(g_\alpha\isdef |\Delta_{\alpha} f|\)). Of course, we have to guarantee
during the algorithm that the admissible condition of \(\mathcal{I}\) is not
violated. A detailed description of this method is provided in
\cite{GG03}. We recall here the algorithm from \cite{GG03} and explain the
most important steps.

\begin{algorithm}[hbt]
\caption{Adaptive sparse-grid quadrature for a function $f$}
\label{alg:adap}
\begin{algorithmic}
\State $\alpha\gets  (0,\ldots,0)${\Comment{$\alpha$: Index associated with a local error estimator $g_{\alpha}$}}
\State $\mathcal{O}\gets  \emptyset${\Comment{$\mathcal{O}$: Old index set}}
\State $\mathcal{A}\gets  \alpha${\Comment{$\mathcal{A}$: Active index set}}
\State $y\gets \Delta_{\alpha} f${\Comment{$y$: Approximation to the value of the integral}}
\State $\eta\gets g_{\alpha}${\Comment{$\eta$: Global error estimator}}
\While{$(\eta>\text{TOL})$}

select $\alpha$ from $\mathcal{A}$ with largest $g_{\alpha}$
\State $\mathcal{A}\gets \mathcal{A}\setminus \alpha$
\State $\mathcal{O}\gets  \mathcal{O}\cup\alpha$
\State $\eta\gets \eta- g_{\alpha}$
\For {$(k =1,\ldots,d)$} 

\State $\beta \gets \alpha+e_k$
\If {$(\beta-e_q \in \mathcal{O}\ \text{for all $q=1,\ldots,d$})$}
\State $\mathcal{A}\gets \mathcal{A}\cup \beta$
\State $x \gets \Delta_{\beta}f$
\State $y \gets y+x$
\State $\eta\gets \eta+ g_{\beta}$
\EndIf
\EndFor
\EndWhile 
\State \Return y
\end{algorithmic}
\end{algorithm}

In Algorithm \ref{alg:adap}, the index set \(\mathcal{I}\) in \eqref{eq:gensg}
is partitioned into the \emph{old index set} \(\mathcal{O}\) and the
\emph{active index set} \(\mathcal{A}\). The active index set contains all
indices \(\alpha\) whose local error estimators \(g_{\alpha}\) actually
contribute to the global error estimator \(\eta\). Then, the element
\(\alpha\) of \(\mathcal{A}\) with the largest local error estimator is
removed from the active index set and entered into the old index set and the
children of \(\alpha\), i.e.~\(\alpha+e_j\), are successively added to the
active index set, as long as all their parents belong to the old index
set. The last step is necessary to guarantee the admissibility
condition. Then, the contribution of the new indices to the value of the
integral as well as the local and global error estimators is updated and the
procedure is repeated {until the global error estimator has 
reached a prescribed tolerance. To clarify the role of $\mathcal{A}$ and $\mathcal{O}$, 
we note that the following conditions are always satisfied during the algorithm 
\begin{enumerate}
\item $\alpha \in \mathcal{O} \Rightarrow (\alpha - e_q) \in \mathcal{O}$ for all $q = {1,...,d}$ with $\alpha_q>0$ 
which means that $\mathcal{O}$ is admissible, 
\item $\alpha \in \mathcal{A} \Rightarrow (\alpha - e_q) \in \mathcal{O}$ for all $q = {1,...,d}$ with $\alpha_q>0$, 
\item  $\alpha \in \mathcal{A} \Rightarrow (\alpha + e_q) \notin \mathcal{O}$ for all $q = {1,...,d}$. 
\end{enumerate}
In Figure \ref{fig:indsg}, the change in the current index set during two steps of the algorithm in $d=2$ 
dimensions is visualized. In the first step, both indices fulfill the admissibility check and are added to the 
active index set. In the second step, only the index $(3,1)$ is added to $\mathcal{A}$ while the admissibility check 
for the index $(2,2)$ fails. }

\begin{figure}[hbt]

\begin{minipage}{0.3\textwidth}
\begin{tikzpicture}[scale=.32,every node/.style={minimum size=1cm},on grid]
 \fill[white,fill opacity=0.9] (0,0) rectangle (7,4);
        \draw[step=1cm, black] (0,0) grid (7,4); 
        \draw[black,very thick] (0,0) rectangle (7,4);
        \fill[tuerkis] (0.1,0.1) rectangle (0.9,0.9);
       	\fill[tuerkis] (1.1,0.1) rectangle (1.9,0.9);
	\fill[tuerkis] (2.1,0.1) rectangle (2.9,0.9);
	\fill[tuerkis] (3.1,0.1) rectangle (3.9,0.9);
	\fill[tuerkis] (4.1,0.1) rectangle (4.9,0.9);
	\fill[red] (5.1,0.1) rectangle (5.9,0.9);

        \fill[tuerkis] (0.1,1.1) rectangle (0.9,1.9);
        \fill[tuerkis] (1.1,1.1) rectangle (1.9,1.9);
        \fill[red] (2.1,1.1) rectangle (2.9,1.9);
        \fill[red] (0.1,2.1) rectangle (0.9,2.9);
\draw[thick,black] (0.5,2.5) circle (0.8);

 \draw[->,thick] (0,0) -- (7.5,0) node[right] {$\alpha_1$};
    	\draw[->,thick] (0,0) -- (0,4.5) node[left] {$\alpha_2$};
    	\foreach \x/\xtext in {0/0, 1/1, 2/2, 3/3, 4/4, 5/5,6/6}
    	\draw[shift={(\x+0.5,0)}] (0pt,2pt) -- (0pt,-2pt) node[below] {$\xtext$};
    	\foreach \y/\ytext in {0/0, 1/1, 2/2,3/3}
    	\draw[shift={(0,\y+0.5)}] (2pt,0pt) -- (-2pt,0pt) node[left] {$\ytext$};
\draw[->,thick](9,2)--(11,2); 
\end{tikzpicture}
\end{minipage}
\hfill
\begin{minipage}{0.3\textwidth}
\begin{tikzpicture}[scale=.32,every node/.style={minimum size=1cm},on grid]
 \fill[white,fill opacity=0.9] (0,0) rectangle (7,4);
        \draw[step=1cm, black] (0,0) grid (7,4); 
        \draw[black,very thick] (0,0) rectangle (7,4);
        \fill[tuerkis] (0.1,0.1) rectangle (0.9,0.9);
       	\fill[tuerkis] (1.1,0.1) rectangle (1.9,0.9);
	\fill[tuerkis] (2.1,0.1) rectangle (2.9,0.9);
	\fill[tuerkis] (3.1,0.1) rectangle (3.9,0.9);
	\fill[tuerkis] (4.1,0.1) rectangle (4.9,0.9);
	\fill[red] (5.1,0.1) rectangle (5.9,0.9);

        \fill[tuerkis] (0.1,1.1) rectangle (0.9,1.9);
        \fill[tuerkis] (1.1,1.1) rectangle (1.9,1.9);
        \fill[red] (2.1,1.1) rectangle (2.9,1.9);
        \fill[tuerkis] (0.1,2.1) rectangle (0.9,2.9);
 	\fill[red] (0.1,3.1) rectangle (0.9,3.9);
 	\fill[red] (1.1,2.1) rectangle (1.9,2.9);
 	\draw[thick,black] (2.5,1.5) circle (0.8);

 \draw[->,thick] (0,0) -- (7.5,0) node[right] {$\alpha_1$};
    	\draw[->,thick] (0,0) -- (0,4.5) node[left] {$\alpha_2$};
    	\foreach \x/\xtext in {0/0, 1/1, 2/2, 3/3, 4/4, 5/5,6/6}
    	\draw[shift={(\x+0.5,0)}] (0pt,2pt) -- (0pt,-2pt) node[below] {$\xtext$};
    	\foreach \y/\ytext in {0/0, 1/1, 2/2,3/3}
    	\draw[shift={(0,\y+0.5)}] (2pt,0pt) -- (-2pt,0pt) node[left] {$\ytext$};
\draw[->,thick](9,2)--(11,2); 
\end{tikzpicture}
\end{minipage}
\hfill
\begin{minipage}{0.3\textwidth}
\begin{tikzpicture}[scale=.32,every node/.style={minimum size=1cm},on grid]
 \fill[white,fill opacity=0.9] (0,0) rectangle (7,4);
        \draw[step=1cm, black] (0,0) grid (7,4); 
        \draw[black,very thick] (0,0) rectangle (7,4);
        \fill[tuerkis] (0.1,0.1) rectangle (0.9,0.9);
       	\fill[tuerkis] (1.1,0.1) rectangle (1.9,0.9);
	\fill[tuerkis] (2.1,0.1) rectangle (2.9,0.9);
	\fill[tuerkis] (3.1,0.1) rectangle (3.9,0.9);
	\fill[tuerkis] (4.1,0.1) rectangle (4.9,0.9);
	\fill[red] (5.1,0.1) rectangle (5.9,0.9);

        \fill[tuerkis] (0.1,1.1) rectangle (0.9,1.9);
        \fill[tuerkis] (1.1,1.1) rectangle (1.9,1.9);
        \fill[tuerkis] (2.1,1.1) rectangle (2.9,1.9);
        \fill[red] (3.1,1.1) rectangle (3.9,1.9);
        \fill[tuerkis] (0.1,2.1) rectangle (0.9,2.9);
 	\fill[red] (0.1,3.1) rectangle (0.9,3.9);
 	\fill[red] (1.1,2.1) rectangle (1.9,2.9);

 \draw[->,thick] (0,0) -- (7.5,0) node[right] {$\alpha_1$};
    	\draw[->,thick] (0,0) -- (0,4.5) node[left] {$\alpha_2$};
    	\foreach \x/\xtext in {0/0, 1/1, 2/2, 3/3, 4/4, 5/5,6/6}
    	\draw[shift={(\x+0.5,0)}] (0pt,2pt) -- (0pt,-2pt) node[below] {$\xtext$};
    	\foreach \y/\ytext in {0/0, 1/1, 2/2,3/3}
    	\draw[shift={(0,\y+0.5)}] (2pt,0pt) -- (-2pt,0pt) node[left] {$\ytext$};
\end{tikzpicture}
\end{minipage}
\vspace*{2mm}
\begin{minipage}{0.4\textwidth}
\tikz{\path[draw=black,fill=tuerkis] (0,0) rectangle (.3cm,.3cm) node[right,yshift=-.15cm,xshift=.2cm] {Old Index set $\mathcal{O}$};}
\end{minipage}
\hfill
\begin{minipage}{0.4\textwidth}
\tikz{\path[draw=black,fill=red] (0,-.5cm) rectangle (.3cm,-.2cm) node[right,yshift=-.15cm,xshift=.2cm] {Active Index set $\mathcal{A}$};}
\end{minipage}
\caption{\label{fig:indsg}Two steps of the adaptive quadrature where $\alpha=(0,2)$ is the index 
with the largest \(g(\alpha)\) in the first step and $\alpha=(2,1)$ in the second step.}
\end{figure}
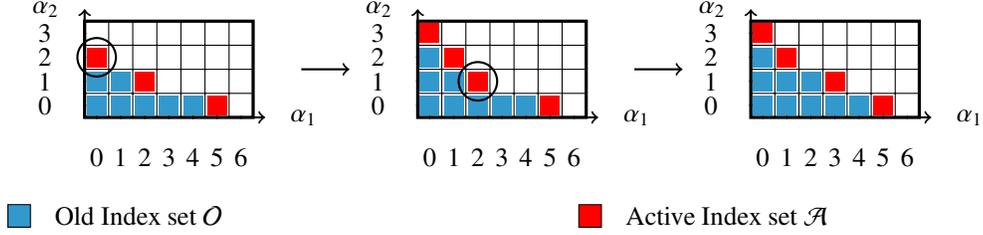

We will use Gau{\ss}-Hermite and Genz-Keister quadrature rules
(cf.~\cite{GK96}) as 1D sequences. Gau{\ss}-Hermite quadrature rules have the
highest degree of polynomial exactness for integrals as in \eqref{eq:interpp}
while Genz-Keister rules have the advantage that they are nested. More
precisely, the Genz-Keister rules are extensions of Gau{\ss}-Hermite
quadrature rules of relatively low degree. As the first extension of the
one-point Gau{\ss}-Hermite quadrature we use the three-point Gau{\ss}-Hermite
quadrature.  Further extensions do not coincide with any other
Gau{\ss}-Hermite quadrature rule.

At the end of this section, we visualize on the right-hand side of Figure
\ref{fig:grid} the 2D adaptive sparse-grid points which are used {for 
the approximation with TOL$=10^{-9}$} in our first
numerical example. On the left-hand side of Figure \ref{fig:grid}, we show
the associated adaptive index set.

\begin{figure}[hbt]

\begin{minipage}{0.4\textwidth}
\begin{tikzpicture}[scale=.5,every node/.style={minimum size=1cm},on grid]
 \fill[white,fill opacity=0.9] (0,0) rectangle (6,5);
        \draw[step=1cm, black] (0,0) grid (6,5); 
        \draw[black,very thick] (0,0) rectangle (6,5);
        \fill[tuerkis] (0.1,0.1) rectangle (0.9,0.9);
       	\fill[tuerkis] (1.1,0.1) rectangle (1.9,0.9);
	\fill[tuerkis] (2.1,0.1) rectangle (2.9,0.9);
	\fill[tuerkis] (3.1,0.1) rectangle (3.9,0.9);
	\fill[tuerkis] (4.1,0.1) rectangle (4.9,0.9);
 	\fill[tuerkis] (1.1,2.1) rectangle (1.9,2.9);
        \fill[tuerkis] (0.1,1.1) rectangle (0.9,1.9);
        \fill[tuerkis] (1.1,1.1) rectangle (1.9,1.9);
        \fill[tuerkis] (2.1,1.1) rectangle (2.9,1.9);
        \fill[tuerkis] (3.1,1.1) rectangle (3.9,1.9);
        \fill[tuerkis] (0.1,2.1) rectangle (0.9,2.9);
         \fill[tuerkis] (0.1,3.1) rectangle (0.9,3.9);
 	\fill[tuerkis] (2.1,2.1) rectangle (2.9,2.9);

 \draw[->,thick] (0,0) -- (6.5,0) node[right] {$\alpha_1$};
    	\draw[->,thick] (0,0) -- (0,5.5) node[left] {$\alpha_2$};
    	\foreach \x/\xtext in {0/0, 1/1, 2/2, 3/3, 4/4, 5/5}
    	\draw[shift={(\x+0.5,0)}] (0pt,2pt) -- (0pt,-2pt) node[below] {$\xtext$};
    	\foreach \y/\ytext in {0/0, 1/1, 2/2,3/3,4/4}
    	\draw[shift={(0,\y+0.5)}] (2pt,0pt) -- (-2pt,0pt) node[left] {$\ytext$};
\end{tikzpicture}

\end{minipage}
\hfill
\begin{minipage}{0.5\textwidth}
\includegraphics[width=.8\textwidth]{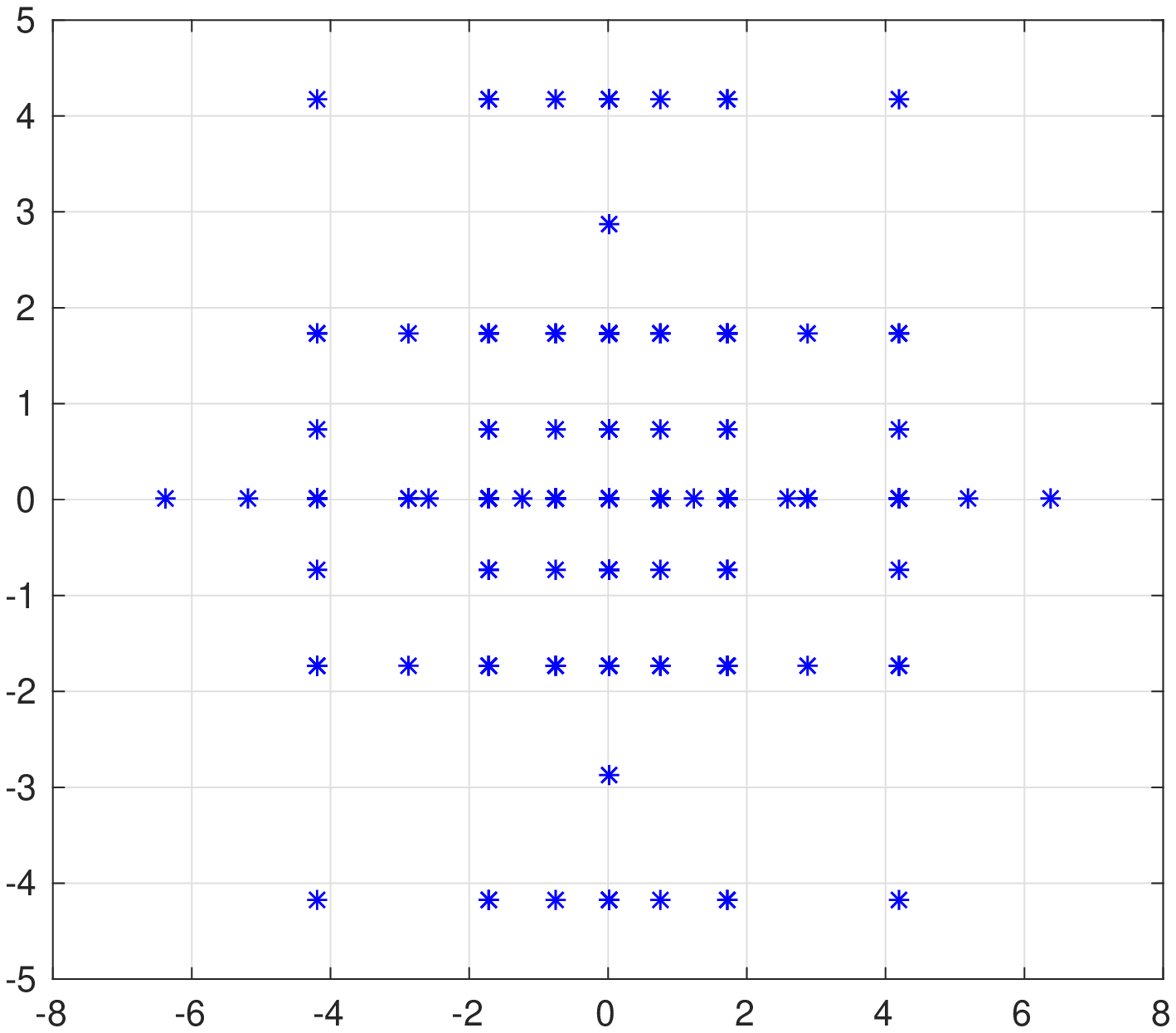}
\end{minipage}
\caption{\label{fig:grid} Index set \(\mathcal{I}\) of the sparse-grid on the
  left and the associated sparse-grid points, which are used in the first
  numerical example, on the right.}
\end{figure}

\begin{remark}
  A further alternative could be to use multidimensional cubature formulas;
 see for instance, \cite{Cools2003445}. In principle, high-order cubature
  formulas also require smoothness of the integrand, therefore we suspect that
  the approach presented here will also work well in the cubature context. 
  These methods are beyond the scope of the current paper.
\end{remark}

\section{Smoothing the payoff}
\label{sec:smoothing-payoff}

In this section, we will describe a simple technique for smoothing the
integrand in~\eqref{eq:call-gauss} which, at the same time,
\begin{itemize}
\item produces an analytic integrand;
\item does not introduce a {bias}  error;
\item reduces the variance of the {resulting} integrand.
\end{itemize}
For the following, we assume that the covariance matrix $\Sigma$ is
invertible (i.e., a positive definite symmetric matrix).

The general idea is that we want to integrate out one Gaussian factor
in~\eqref{eq:call-gauss}, conditioning on the remaining $d-1$
factors. Clearly, the outcome of such a procedure is a smooth function of the
remaining factors. However, generically there is no closed formula for this
function. The reason for this is that there is no closed formula for the
simple special case
\begin{equation*}
  E\left[\left( e^{\sigma_1 Z} + e^{\sigma_2 Z} - K \right)^+ \right]
\end{equation*}
for $Z \sim \mathcal{N}(0,1)$ and $\sigma_1 \neq \sigma_2$. Indeed,
$e^{\sigma_1 Z} + e^{\sigma_2 Z}$ has a log-normal distribution if and only if
$\sigma_1 = \sigma_2$. In this case, the above expression is given in terms of
the celebrated Black-Scholes formula, which will be reviewed below.

It turns out, that a {suitable} choice of factorization of the covariance matrix
of the Gaussian factors allows us to factor out one common, independent
log-normal term. This is a consequence of the

\begin{lemma}
  \label{lem:rank-reduction}
  Let $\Sigma$ be a symmetric, positive definite $d\times d$ matrix. Then
  there is {for each vector $v\in\R^d$} a diagonal matrix 
  $D = \diag\left( \lambda_1^2,  \lambda_d^2, \ldots, \lambda_d^2 \right)$ 
  and an invertible matrix $V \in\R^{d \times d}$ with the property that 
  $V_{i,1} \equiv v_i$, $i = 1, \ldots,d$, and
  \begin{equation*}
    \Sigma = V D V^{\top}.
  \end{equation*}
  Moreover, we may choose the remaining columns of $V$ such that 
  $\lambda_2^2\ge \ldots \ge \lambda_d^2 \ge 0$.
\end{lemma}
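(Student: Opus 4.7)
Since an invertible $V$ with $v$ as its first column forces $v\neq 0$, I assume $v\neq 0$ throughout. The strategy is to pick the scalar $\lambda_1^2$ in just the right way so that the residual matrix $M\isdef \Sigma-\lambda_1^2 vv^\top$ is positive semidefinite of rank exactly $d-1$; then a standard spectral decomposition of $M$ supplies the remaining $d-1$ columns of $V$, and one only has to verify that these columns, together with $v$, form a basis of $\R^d$.

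Concretely, I would set
\begin{equation*}
  \lambda_1^2 \isdef \frac{1}{v^\top \Sigma^{-1} v},
\end{equation*}
which is well-defined and strictly positive because $\Sigma^{-1}$ is positive definite and $v\neq 0$. Direct substitution gives $M\Sigma^{-1}v = v - \lambda_1^2(v^\top\Sigma^{-1}v)v = 0$, so $\Sigma^{-1}v\in\ker M$. For positive semidefiniteness, apply Cauchy--Schwarz to the decomposition $v^\top x = (\Sigma^{-1/2}v)^\top(\Sigma^{1/2}x)$ to obtain $(v^\top x)^2 \le (v^\top\Sigma^{-1}v)(x^\top\Sigma x)$, which rearranges to $x^\top M x \ge 0$ with equality precisely when $\Sigma^{1/2}x$ is parallel to $\Sigma^{-1/2}v$, i.e.\ when $x\in\operatorname{span}(\Sigma^{-1}v)$. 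Hence $M\succeq 0$ has rank exactly $d-1$ with one-dimensional kernel $\operatorname{span}(\Sigma^{-1}v)$.

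Next, I spectrally decompose the restriction of $M$ to $(\ker M)^\perp$, producing orthonormal eigenvectors $u_2,\ldots,u_d$ with strictly positive eigenvalues which I arrange so that $\lambda_2^2\ge\cdots\ge \lambda_d^2>0$. Define $V\isdef[v \mid u_2 \mid \cdots \mid u_d]$ and $D\isdef\diag(\lambda_1^2,\lambda_2^2,\ldots,\lambda_d^2)$. Then
\begin{equation*}
 VDV^\top = \lambda_1^2 vv^\top + \sum_{j=2}^d \lambda_j^2 u_j u_j^\top = \lambda_1^2 vv^\top + M = \Sigma,
\end{equation*}
as required, and the ordering condition is automatic by construction.

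The only real obstacle is invertibility of $V$. By construction $u_2,\ldots,u_d$ span $(\ker M)^\perp = \operatorname{span}(\Sigma^{-1}v)^\perp$, so if $v$ were a linear combination of the $u_j$ then $v$ would be orthogonal to $\Sigma^{-1}v$, giving $v^\top\Sigma^{-1}v=0$, a contradiction to $\Sigma^{-1}\succ 0$ and $v\neq 0$. Hence $v$ is linearly independent of $u_2,\ldots,u_d$, $V$ is invertible, and the proof is complete.
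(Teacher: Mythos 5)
Your proof is correct and follows essentially the same route as the paper: both subtract the rank-one matrix $\lambda_1^2 vv^\top$ with $\lambda_1^2 = (v^\top\Sigma^{-1}v)^{-1}$ and then spectrally decompose the positive semidefinite remainder of rank $d-1$. The only difference is that you prove the rank-one modification property directly via Cauchy--Schwarz (the paper cites it from a reference) and you explicitly verify the invertibility of $V$, a detail the paper leaves implicit.
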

\begin{proof}
  From \cite[p.~126]{A02}, we know that for every \({ 0}\neq { s}\in
  \mathbb{R}^n\), the rank-1 modification
 \begin{equation}\label{eq:rank1}
 \tilde{{A}} = {A} - \frac{({As}) ({As})^{\top}}{{
     s}^{\top}  
 { As}}
 \end{equation}
 of a symmetric, positive definite matrix \({ A}\in \mathbb{R}^{d\times d}\)
 yields a symmetric and positive semi-definite matrix \(\tilde{A}\in
 \mathbb{R}^{d\times d}\) of rank \(d-1\). 
 Let us denote \({w} \coloneqq {\Sigma}^{-1}v\). Then it
 follows from \eqref{eq:rank1} that
 \[
 \tilde{\Sigma} = {\Sigma} - \frac{vv^{\top}}{
   v^{\top} {w}}
 \]
 is a symmetric and positive semi-definite matrix of rank \(d-1\).  Denote by
 \((\lambda_i^2,{ v}_i)\) for \(i=2,\ldots,d\) the \(d-1\) eigenpairs
 corresponding to the \(d-1\) positive eigenvalues of \(\tilde{\Sigma}\).
 Defining \(V=[{ v}_1,{v}_2,\ldots,{ v}_d]\) with \({ v}_1=v\) and
 \(D=\diag(\lambda_1^2, \lambda_2^2,\ldots, \lambda_d^2)\) with
 \(\lambda_1^2=(v^{\top} {w})^{-1}\) leads to the desired result.
\end{proof}

{For the following computations, we choose the vector $v$ from
  Lemma~\ref{lem:rank-reduction} as $v=\mathbf{1} \coloneqq [1,\ldots, 1]^\top$.}  In
the next step, we replace $X$ by $Y \coloneqq V^{-1}X \sim \mathcal{N}(0, D)$
and note that the components of $Y$ are independent. By substituting the
decomposition $X = VY$ into~\eqref{eq:call-gauss}, we obtain
\begin{align}
  C_{\mathcal{B}} &= E\left[ \left( \sum_{i=1}^d w_i e^{(VY)_i} - K \right)^+
                    \right] \nonumber\\
  &= E\left[ \left( \sum_{i=1}^d w_i \exp\left( Y_1 + \sum_{j=2}^d V_{i,j} Y_j
    \right) - K \right)^+ \right] \nonumber\\
  &= E\left[ \left( h(Y_2, \ldots, Y_d) e^{Y_1} - K \right)^+
    \right] \label{eq:call-Y} 
\end{align}
with
\begin{equation}
  \label{eq:h-def}
  h(\bary) = h(y_2, \ldots, y_d) \coloneqq \sum_{i=1}^d w_i \exp\left(
    \sum_{j=2}^d V_{i,j} y_j \right), \quad \bary \coloneqq (y_2, \ldots, y_d)
  \in \R^{d-1}.
\end{equation}

\begin{lemma}[{Conditional Expectation formula}]
  \label{lem:BS-formula}
  Let
  $\barY = (Y_2, \ldots, Y_d) = ((V^{-1}X)_2, \ldots, (V^{-1}X)_d)
  \sim \mathcal{N}\left(0, \barD\right)$,
  $\barD \coloneqq \diag(\lambda_2^2, \ldots,
  \lambda_d^2)$. Then
  \begin{equation*}
    E\left[\left. \left( \sum_{i=1}^d w_i e^{X_i} - K \right)^+ \right| \barY
    \right] = C_{BS}\left(h(\barY) e^{\lambda_1^2/2}, K, \lambda_1 \right),
  \end{equation*}
  where
  \begin{gather*}
    C_{BS}(S_0, K, \sigma) \coloneqq \Phi(d_1) S_0 - \Phi(d_2) K, \\
    d_{1/2} \coloneqq \f{1}{\sigma} \left[ \log\left( \f{S_0}{K} \right) \pm
      \f{\sigma^2}{2} \right],
  \end{gather*}
  is the Black-Scholes formula for $r=0$, with maturity $T = 1$.
\end{lemma}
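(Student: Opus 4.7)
The plan is to combine two simple ingredients: the independence of the components of $Y$ (a consequence of Lemma~\ref{lem:rank-reduction}, which diagonalises the covariance of $X = VY$), and the explicit form of the Black-Scholes formula for the expectation of a log-normal call payoff.

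First, I would unpack the structure coming from the choice $v = \mathbf{1}$ in Lemma~\ref{lem:rank-reduction}. Since the first column of $V$ equals $\mathbf{1}$, we have $V_{i,1} = 1$ for every $i$, and hence $X_i = (VY)_i = Y_1 + \sum_{j=2}^{d} V_{i,j} Y_j$. Factoring $e^{Y_1}$ out of the basket gives
\begin{equation*}
\sum_{i=1}^d w_i e^{X_i} = e^{Y_1} \sum_{i=1}^d w_i \exp\!\left(\sum_{j=2}^d V_{i,j} Y_j\right) = e^{Y_1}\, h(\barY),
\end{equation*}
with $h$ as defined in~\eqref{eq:h-def}. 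This is essentially the decomposition already displayed in~\eqref{eq:call-Y}.

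Second, I would exploit the independence of $Y_1$ and $\barY$. Because $Y = V^{-1}X \sim \mathcal{N}(0, D)$ with $D$ diagonal, the component $Y_1 \sim \mathcal{N}(0,\lambda_1^2)$ is independent of $\barY \sim \mathcal{N}(0,\barD)$. Conditioning on $\barY$, the function $h(\barY)$ becomes a (positive) constant and the conditional expectation reduces to a one-dimensional integral against the $\mathcal{N}(0,\lambda_1^2)$ law of $Y_1$:
\begin{equation*}
E\!\left[\left.\left(\sum_{i=1}^d w_i e^{X_i} - K\right)^+\right|\barY\right] = E\!\left[\left(h(\barY)\, e^{Y_1} - K\right)^+\,\big|\,\barY\right].
\end{equation*}

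Third, I would rewrite $h(\barY) e^{Y_1}$ in a form that matches the terminal value of a geometric Brownian motion. Setting $S_0 \coloneqq h(\barY) e^{\lambda_1^2/2}$ and writing $Y_1 = \lambda_1 Z$ with $Z \sim \mathcal{N}(0,1)$ independent of $\barY$, we obtain $h(\barY) e^{Y_1} = S_0 \exp(-\lambda_1^2/2 + \lambda_1 Z)$, which is precisely the time-$1$ value of a Black-Scholes asset with spot $S_0$, volatility $\lambda_1$, and zero interest rate. The classical Black-Scholes computation of $E[(S_0 e^{-\lambda_1^2/2 + \lambda_1 Z} - K)^+]$ — i.e., splitting the expectation into the indicator on $\{Z > -d_2\}$, performing the standard completion-of-square change of variable in the $S_0$-term, and identifying the cumulative normal — yields exactly $C_{BS}(S_0, K, \lambda_1)$, proving the claim.

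No real obstacle is expected: Lemma~\ref{lem:rank-reduction} already provides the independence structure and the choice $v=\mathbf{1}$ supplies the common $e^{Y_1}$ factor; the remaining work is the well-known derivation of the Black-Scholes formula, which I would either invoke or carry out in one line via the change of measure/completion of the square. The only care needed is to note that $h(\barY) > 0$ almost surely (ensuring $S_0 > 0$ and the logarithm in $d_{1/2}$ is well defined), which follows from the positivity of the weights $w_i$.
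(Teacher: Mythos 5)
Your proposal is correct and follows essentially the same route as the paper: condition on $\barY$ using the independence of $Y_1$ and $\barY$, write $Y_1 = \lambda_1 Z$, and match $E\left[\left(h(\barY)e^{\lambda_1 Z} - K\right)^+\right]$ with the Black--Scholes formula by choosing $S_0 = h(\barY)e^{\lambda_1^2/2}$ and $\sigma = \lambda_1$. The only additions are the explicit re-derivation of the factorization already established in the displayed computation preceding the lemma and the (welcome but minor) remark that $h(\barY) > 0$ almost surely.
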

\begin{proof}
  As $Y_1$ and $\barY$ are independent and $Y_1 \sim \mathcal{N}(0,
  \lambda_1^2)$, we have
  \begin{equation*}
    E\left[ \left. \left( h(Y_2, \ldots, Y_d) e^{Y_1} - K \right)^+
    \right| \barY = \bary \right] = E\left[ \left( h(\bary) e^{\lambda_1 Z} -
      K \right)^+ \right]
  \end{equation*}
  for some $Z \sim \mathcal{N}(0,1)$. On the other hand, for $r=0$ and
  maturity $T = 1$, the Black-Scholes formula is given by
  \begin{equation*}
    C_{BS}(S_0, K, \sigma) = E\left[ \left( S_0 e^{-\half \sigma^2 + \sigma Z}
      - K \right)^+ \right] = \Phi(d_1) S_0 - \Phi(d_2) K,
  \end{equation*}
  since $S_T = S_0 \exp\left( - \half \sigma^2 T + \sigma B_T \right)$ for a
  Brownian motion $B$. By comparing these expressions, we see that we have to
  choose $K = K$, $\sigma = \lambda_1$ and
  $S_0 = h(\bary) e^{\half \lambda_1^2}$.
\end{proof}

Lemma~\ref{lem:BS-formula} directly implies
\begin{proposition}
  \label{multivariate-BS-smoothing}
  The basket option price in the multivariate Black-Scholes setting satisfies
  \begin{equation}
   \label{eq:basket-int-problem}
    C_{\mathcal{B}} = E\left[ C_{BS}\left(h\left( \sqrt{\barD} Z \right)
      e^{\lambda_1^2/2}, K, \lambda_1 \right) \right], \quad Z \sim
  \mathcal{N}\left(0, I_{d-1} \right), \  \sqrt{\barD} =
  \diag(\lambda_2, \ldots, \lambda_d).
  \end{equation}
\end{proposition}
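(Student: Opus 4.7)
The proof should be essentially a one-step application of the tower property of conditional expectations, so the plan is very short. First I would observe that, by the tower property,
\begin{equation*}
  C_{\mathcal{B}} = E\left[\left(\sum_{i=1}^d w_i e^{X_i} - K\right)^+\right] = E\left[\, E\left[\left.\left(\sum_{i=1}^d w_i e^{X_i} - K\right)^+ \right| \barY \right]\,\right],
\end{equation*}
where $\barY = (Y_2,\ldots,Y_d)$ as defined before Lemma~\ref{lem:BS-formula}. The inner conditional expectation is exactly the object computed in Lemma~\ref{lem:BS-formula}, which gives the closed-form value $C_{BS}(h(\barY) e^{\lambda_1^2/2}, K, \lambda_1)$.

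The only remaining step is a change of variables for the outer expectation. Since $\barY \sim \mathcal{N}(0, \barD)$ with $\barD = \diag(\lambda_2^2,\ldots,\lambda_d^2)$ diagonal, we can write $\barY \overset{\mathcal{L}}{=} \sqrt{\barD}\, Z$ for $Z \sim \mathcal{N}(0, I_{d-1})$, where $\sqrt{\barD} = \diag(\lambda_2,\ldots,\lambda_d)$. Substituting this into the outer expectation yields the claimed identity~\eqref{eq:basket-int-problem}.

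There is no real obstacle here: the nontrivial content has already been packaged into Lemma~\ref{lem:rank-reduction} (which produces the decomposition $\Sigma = VDV^\top$ with $v = \mathbf{1}$) and Lemma~\ref{lem:BS-formula} (which performs the conditional Black-Scholes computation using the independence of $Y_1$ and $\barY$). The proposition is therefore just the statement that one writes the unconditional price as an expectation over $\barY$ of the conditional Black-Scholes price, expressed in standard Gaussian coordinates so that it is directly amenable to the sparse-grid and QMC quadratures of Section~\ref{sec:remind-effic-multi}.
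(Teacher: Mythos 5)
Your proposal is correct and matches the paper's (implicit) argument exactly: the paper states that Lemma~\ref{lem:BS-formula} ``directly implies'' the proposition, and the content you spell out --- tower property, substitution of the conditional Black--Scholes expression, and the reparametrization $\barY \overset{\mathcal{L}}{=} \sqrt{\barD}\,Z$ --- is precisely that one-step deduction.
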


On the left-hand side of Figure \ref{fig:kinksmooth1}, a visualization of the integrand 
in \eqref{eq:call-gauss} before smoothing can be found while the corresponding smoothed integrand 
from \eqref{eq:basket-int-problem} is presented on the right-hand side of Figure \ref{fig:kinksmooth1}. 

\begin{figure}[htb]
\includegraphics[width=.48\textwidth]{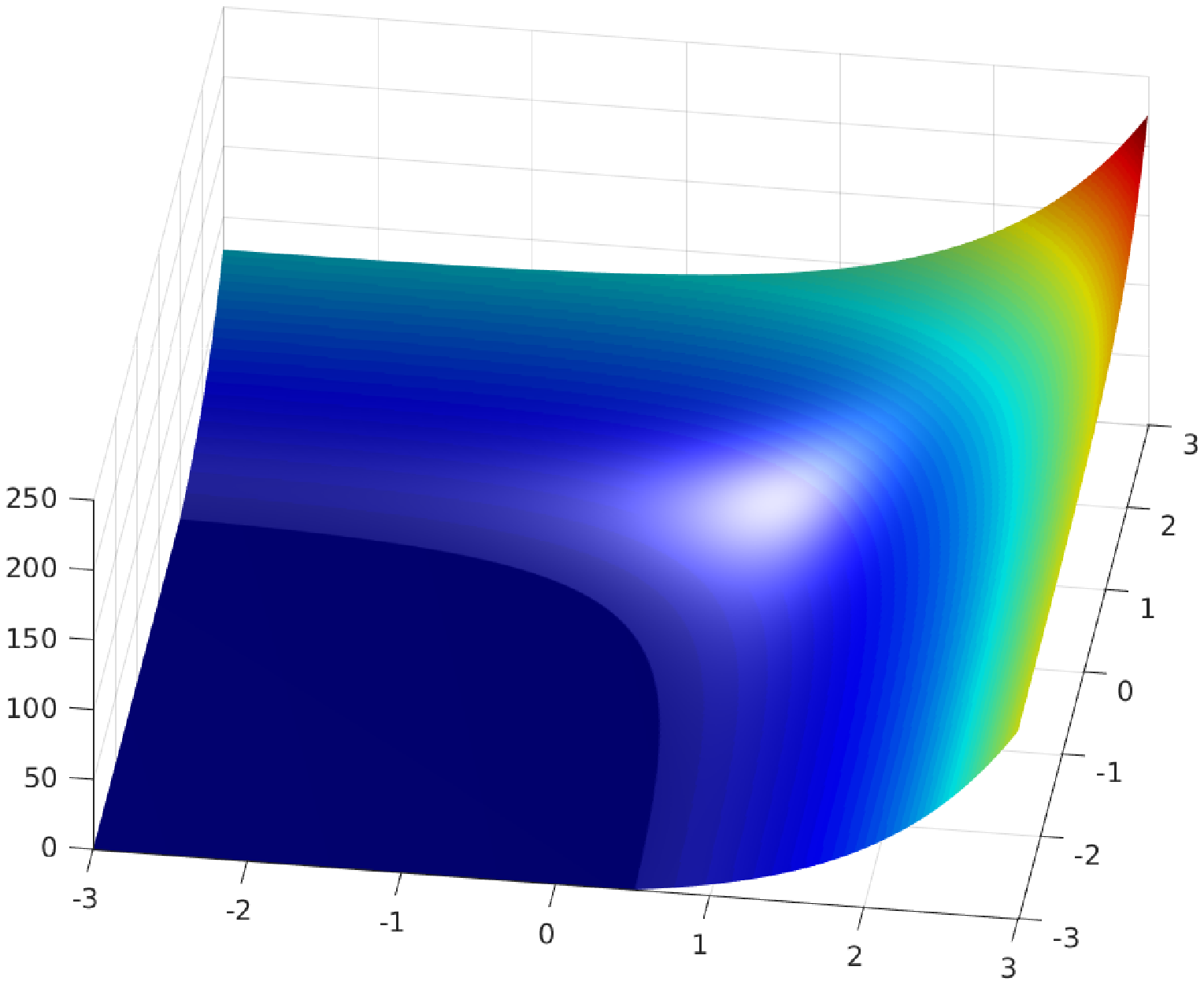}
\includegraphics[width=.48\textwidth]{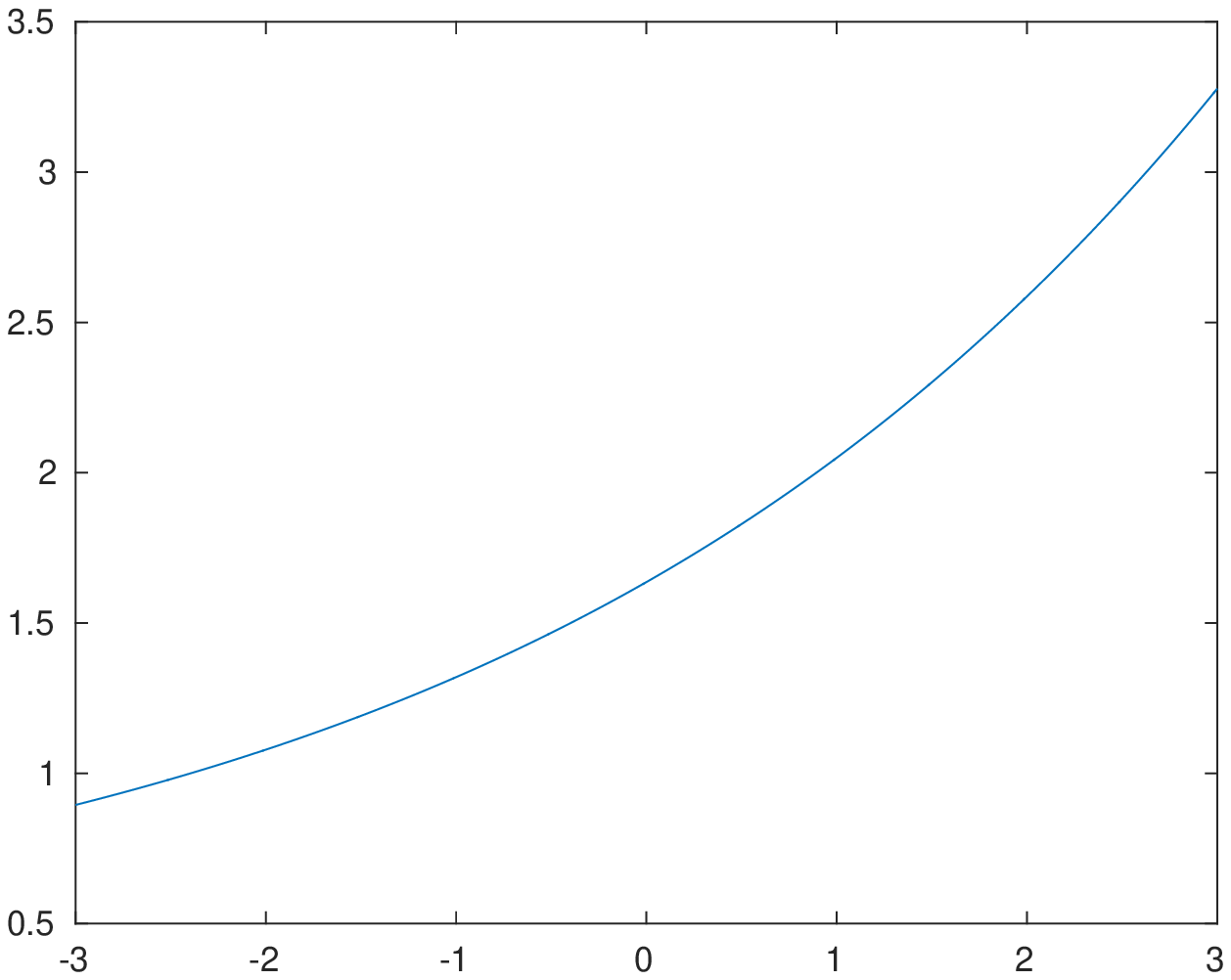}
\caption{\label{fig:kinksmooth1} 
An example of a two-dimensional integrand with kink from \eqref{eq:call-gauss} and 
its smoothed counterpart with respect to \eqref{eq:basket-int-problem}.}
\end{figure}

  As remarked earlier, a similar closed-form expression cannot be obtained
  when the first column $V_{\cdot,1}$ of the matrix $V$ is a general
  $d$-dimensional vector. However, we may still get an explicit formula if
  $V_{\cdot,1}$ only takes values in $\set{0,1}$. For simplicity, let us
  assume that the first $k$ entries of $V_{\cdot,1}$ are $1$ and the
  remaining entries are $0$. The computation before Lemma~\ref{lem:BS-formula}
  then gives
  \begin{gather*}
    C_{\mathcal{B}} = E\left[ \left( h_1(Y_2, \ldots, Y_k) e^{Y_1} + h_2(Y_{k+1},
        \ldots, Y_d) - K \right)^+ \right],\\
    h_1(y_2, \ldots, y_k) \coloneqq \sum_{i=1}^k w_i \exp\left( \sum_{j=2}^d
      V_{i,j} y_j \right), \\
    h_2(y_{k+1}, \ldots, y_d) \coloneqq \sum_{i=k+1}^d w_i \exp\left( \sum_{j=2}^d
      V_{i,j} y_j \right).
  \end{gather*}
  By conditioning again on $\barY$, we once again arrive at the Black-Scholes
  formula, this time requiring a shift in $K$ as well. In the end, we obtain
  \begin{equation}
  \label{eq:CSwithv}
    E\left[\left. \left( \sum_{i=1}^d w_i e^{X_i} - K \right)^+ \right| \barY
    \right] = C_{BS}\left(h_1(Y_2, \ldots, Y_k) e^{\lambda_1^2/2}, K -
      h_2(Y_{k+1}, \ldots, Y_d), \lambda_1 \right),
  \end{equation}
  in the sense that
  \begin{equation*}
    C_{BS}(S_0, K, \sigma) = S_0 - K \text{ for } K < 0. 
  \end{equation*}
  In general, we therefore suggest to choose $V_{\cdot,1}$ such as to maximize
  the effective smoothing parameter $\lambda_1$.

More concretely, let us assume that the eigenvalues $\mu_1^2 \ge \cdots \ge
\mu_d^2 > 0$ of $\Sigma$ are given. Let $D = \diag\left(\mu_1^2, \ldots,
  \mu_d^2\right)$. Of course, the matrix $Q \in O(d)$ of corresponding
eigenvectors of $\Sigma$ satisfies 
\begin{equation*}
  \Sigma = Q D Q^\top.
\end{equation*}
Denoting $\lambda_1^2 = \lambda_1^2(D,Q,v) = \ip{v}{\Sigma^{-1}v}^{-1}$ and
$\mathcal{V} \coloneqq \set{0,1}^d \setminus \set{0}$, we are looking for the
worst possible smoothing effect given the eigenvalues $D$ of the covariance
matrix $\Sigma$ and \emph{given that we choose the vector $v$ optimally},
i.e., we are looking for
\begin{equation*}
  \lambda_\ast^2(D) \coloneqq \min_{Q \in O(d)} \max_{v \in \mathcal{V}}
  \lambda_1^2(D,Q,v).
\end{equation*}
\begin{lemma}\label{lem:lower_bound}
  We have $\lambda_\ast^2(D) \ge \mu_d^2$.
\end{lemma}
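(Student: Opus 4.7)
My plan is to show that for any choice of orthogonal $Q$, already the simplest possible choices in $\mathcal{V}$, namely the standard basis vectors, certify the lower bound. Fix an arbitrary $Q\in O(d)$ and set $\Sigma=QDQ^\top$ so that $\Sigma^{-1}=QD^{-1}Q^\top$. To bound $\max_{v\in\mathcal V}\lambda_1^2(D,Q,v)$ from below by $\mu_d^2$, it is enough to produce a single $v\in\mathcal V$ with $\langle v,\Sigma^{-1}v\rangle\le 1/\mu_d^2$.

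The natural candidate is $v=e_j$, a canonical basis vector; it lies in $\{0,1\}^d\setminus\{0\}=\mathcal V$, so it is admissible. Then
\begin{equation*}
\langle e_j,\Sigma^{-1}e_j\rangle=(\Sigma^{-1})_{jj}=\sum_{k=1}^d \frac{Q_{jk}^2}{\mu_k^2}\le \frac{1}{\mu_d^2}\sum_{k=1}^d Q_{jk}^2=\frac{1}{\mu_d^2},
\end{equation*}
where the inequality uses $\mu_k\ge \mu_d$ for every $k$, and the final equality uses that the rows of the orthogonal matrix $Q$ have unit norm. Consequently $\lambda_1^2(D,Q,e_j)=\langle e_j,\Sigma^{-1}e_j\rangle^{-1}\ge \mu_d^2$, and a fortiori $\max_{v\in\mathcal V}\lambda_1^2(D,Q,v)\ge \mu_d^2$. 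Taking the minimum over $Q\in O(d)$ yields $\lambda_\ast^2(D)\ge \mu_d^2$, as claimed.

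There is essentially no obstacle here: the whole argument rests on recognizing that $e_j\in\mathcal V$ (since admissibility only requires entries in $\{0,1\}$ and nonzero), together with the elementary observation that the $j$-th diagonal entry of $\Sigma^{-1}$ is a convex combination of the $\mu_k^{-2}$ with weights $Q_{jk}^2$ summing to one. If one wishes, a tighter estimate can be obtained by averaging over $j$ and using $\operatorname{tr}(\Sigma^{-1})=\sum_k\mu_k^{-2}$, but this refinement is not needed for the stated lower bound.
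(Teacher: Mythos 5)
Your proof is correct, and while it rests on the same elementary estimate as the paper's --- for a unit vector $q$ one has $\sum_k q_k^2/\mu_k^2 \le 1/\mu_d^2$ --- it organizes the logic differently. The paper first applies the weak-duality inequality $\min_Q \max_v \ge \max_v \min_Q$, then fixes $v = e_d$ and computes the inner minimum over $Q \in O(d)$ exactly by solving a small optimization problem over the last row of $Q$ on the unit sphere. You instead fix an arbitrary $Q$ and exhibit a single admissible witness $v = e_j \in \mathcal{V}$ with $\langle e_j, \Sigma^{-1} e_j\rangle = (\Sigma^{-1})_{jj} = \sum_k Q_{jk}^2/\mu_k^2 \le 1/\mu_d^2$, so that $\max_{v\in\mathcal{V}} \lambda_1^2(D,Q,v) \ge \mu_d^2$ holds uniformly in $Q$; taking the minimum over $Q$ finishes. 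This is more direct: it avoids the min--max swap and any exact minimization, and the only inputs are that $e_j \in \{0,1\}^d\setminus\{0\}$ and that the rows of an orthogonal matrix have unit norm. What the paper's longer route buys is the extra information, recorded in its proof, that the dual quantity $\max_{v\in\mathcal{V}}\min_{Q\in O(d)}\lambda_1^2(D,Q,v)$ equals $\mu_d^2$ exactly; your argument does not (and for the stated lemma need not) establish that.
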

\begin{proof}
  We obviously have
  \begin{equation*}
    \lambda_\ast^2(D) = \min_{Q \in O(d)} \max_{v \in \mathcal{V}}
    \lambda_1^2(D,Q,v) \ge \max_{v \in \mathcal{V}} \min_{Q \in O(d)}
    \lambda_1^2(D,Q,v) \ge \min_{Q \in O(d)} \lambda_1^2(D,Q,e_d),
  \end{equation*}
  for $e_d = (0, \ldots, 0, 1)$. 

  Clearly, the minimizing $Q$ should have $e_d$ as its last row, making $e_d$
  the eigenvector corresponding to the smallest eigenvalue of $\Sigma$. 
  Indeed, for any $Q \in O(d)$ we have
  \begin{equation*}
    \lambda_1^2(D,Q,e_d) = \ip{Q^\top e_d}{D^{-1}Q^\top e_d}^{-1} =
    \ip{q}{D^{-1}q}^{-1} = \left(\sum_{i=1}^d \f{q_i^2}{\mu_i^2} \right)^{-1}
    \eqqcolon f(D,q),
  \end{equation*}
  where $q = q(Q)$ denotes the last row of $Q$ (understood as column
  vector). Note that the range of $q$ (as a function of $Q$) is $S_{d-1}$,
  hence we need to minimize $f(D,q)$ of all vectors $q$ with $q_1^2 + \cdots +
  q_d^2 = 1$. It is easy to see that the minimizer is $q = e_d$ and the value
  is
  \begin{equation*}
    \min_{Q\in O(d)} \lambda_1^2(D,Q,e_d) = \mu_d^2.
  \end{equation*}

  In fact, we claim that the above lower bound holds uniformly over $v$ in the
  sense that
  \begin{equation*}
    \max_{v \in \mathcal{V}} \min_{Q \in O(d)} \lambda_1^2(D,Q,v) = \mu_d^2.
  \end{equation*}
  The reason is that for arbitrary $v \in \mathcal{V}$ the minimizing $Q$ is
  still given such that $v$ is (up to normalization) the eigenvector
  corresponding to the smallest eigenvalue of $\Sigma$. Hence,
  \begin{equation*}
    \min_{Q \in O(d)} \lambda_1^2(D,Q,v) = \left( \f{\abs{v}^2}{\mu_d^2}
    \right)^{-1} = \f{\mu_d^2}{\abs{v}^2}.
  \end{equation*}
  The equality follows by noting that $\min_{v\in\mathcal{V}} \abs{v}^2 = 1$.
\end{proof}

\begin{remark}
  It is easy to check that the inequality in Lemma~\ref{lem:lower_bound} is
  generally strict. We are not aware of more explicit expressions for
  $\lambda_\ast^2(D)$.
\end{remark}


\begin{remark}
\label{rem:impsampling}
It is worth observing that after the conditional expectation
\eqref{eq:basket-int-problem} one may also perform a change of measure on the
resulting $d-1$ dimensions to enhance the convergence of all the quadratures
discussed in this work. For instance, this is particularly important for OTM
options.
\end{remark}

\section{Numerical example 1: Multivariate Black-Scholes setting}
\label{sec:mult-black-schol}

In our first numerical example, we consider the pricing problem
\eqref{eq:basket-option} of a European basket option in a Black-Scholes
model. This price depends on the strike price \(K\), the weight vector \(c\)
and the vector \(S_{T}\) containing the values of the different assets at the
maturity \(T\). Moreover, the distribution of \(S_{T}\) can be deduced from
the initial values of the assets \(S_0\), the vector of volatilities
\(\sigma\) and the correlation matrix \(\rho\), which determine the
Black-Scholes model in \eqref{eq:BS-SDE}. The initial values in our examples
are chosen randomly, that is independently and uniformly distributed from the
interval $S_0^i \in [8,20]$. The volatilities are also chosen randomly
from the interval $\sigma_i \in [0.3,0.4]$. Following \cite{Doust}, the
correlation matrix \(\rho= \tau \tau^{\top}\) is given by a lower-triangular
matrix \(\tau\), {parameterized by a vector $x \in [-1,1]^{d-1}$}
as follows:
{
\[
 \tau_1 = \begin{pmatrix} 1  \\
  \operatorname{cp}(x) \end{pmatrix}, \quad 
  \tau_2 = \sqrt{1-x_1^2}\begin{pmatrix} 0  \\1 \\
  \operatorname{cp}(x_{2:d-1}) \end{pmatrix},\ldots ,\quad
  \tau_{d} = \sqrt{1-x_{d-1}^2}\begin{pmatrix} 0  \\\vdots \\0\\
  1\end{pmatrix}. 
\]}
Herein, we employed the MATLAB-inspired notation
\(x_{2:d-1}=[x_2,\ldots,x_{d-1}]^{\top}\). In addition, we denote by
\(\operatorname{cp}\colon \R^{d-1}\to \R^{d-1}\) the cumulative product given
by
\[
\operatorname{cp}(x)=[x_1,x_1x_2,\ldots, x_1x_2\cdots x_{d-1}]^{\top}. 
\]
{Note that any such matrix is a proper correlation matrix,
  which only depends on $d-1$ (instead of $d(d-1)/2$) free parameters, and is
  therefore much easier to apply in practice. For sake of concreteness, we
  choose independently and uniformly distributed entries \(x_i\in [0.8,1]\),
  which lead to positive correlations between the individual assets comprising
the basket, a typical situation in equity markets.}
The weight vector \(c\) is chosen such that the basket is an average of the
different assets, i.e.~\(c_i=1/d\). Moreover, we choose three different
settings for the strike price, \(K = c^{\top}S_0\) (ATM), \(K = 1.2\cdot
c^{\top}S_0\) (OTM) and \(K = 0.8\cdot c^{\top}S_0\) (ITM).

\begin{remark}
  We tested our experiments with different, randomly chosen weight vectors
  \(c\in [0,1]^d\) and obtained similar results. Hence, it seems that there is
  only a slight dependence between the weight vector in the basket and the
  performance of the different quadrature methods.
\end{remark}

We compare several integration schemes applied to the original problem
\eqref{eq:call-gauss} and the smoothened problem
\eqref{eq:basket-int-problem}. To be more precise, we consider the MC method,
the QMC method based on Sobol points and the sparse-grid method described in
Section 2. {Moreover, we also use a sparse grid approximation to the smoothed integrand 
as control variate to accelerate the convergence of MC and QMC. 
To keep track of the different quadratures that are used in the numerical examples, we 
summarize them together with their acronyms in table \ref{tab:acro}. Additionally, the 
colors and markers with which they appear in the convergence plots 
are listed here. }

\begin{table}[hbt]
\begin{tabular}{c|c|c|c}
 Method & Acronym & Color & Marker\\
\hline 
Adaptive sparse grid to \eqref{eq:call-gauss}& aSG 
&\begin{tikzpicture}
\fill[darkblue] (-.4,-.1) rectangle (.4,.1); 
\end{tikzpicture}
 & $\circ$  \\
\hline
Quasi-Monte Carlo to \eqref{eq:call-gauss}& QMC  
&\begin{tikzpicture}
\fill[darkgreen] (-.4,-.1) rectangle (.4,.1); 
\end{tikzpicture}
 & $\circ$  \\
\hline
Monte Carlo to \eqref{eq:call-gauss}& MC 
&\begin{tikzpicture}
\fill[darkred] (-.4,-.1) rectangle (.4,.1); 
\end{tikzpicture}
 &
 $\circ$  \\
\hline
aSG to \eqref{eq:basket-int-problem}& aSG+CS 
&\begin{tikzpicture}
\fill[cyan] (-.4,-.1) rectangle (.4,.1); 
\end{tikzpicture}
&
$\star$ \\
\hline
aSG with respect to \eqref{eq:CSwithv}& aSG+CS2 
&\begin{tikzpicture}
\fill[royblue] (-.4,-.1) rectangle (.4,.1); 
\end{tikzpicture}
&
$\star$ \\
\hline
QMC to \eqref{eq:basket-int-problem}& QMC+CS 
&\begin{tikzpicture}
\fill[green] (-.4,-.1) rectangle (.4,.1); 
\end{tikzpicture}
&
$\star$ \\
\hline
MC to \eqref{eq:basket-int-problem}& MC+CS 
&\begin{tikzpicture}
\fill[red] (-.4,-.1) rectangle (.4,.1); 
\end{tikzpicture}
&
$\star$ \\
\hline
QMC to \eqref{eq:basket-int-problem} with control variate & QMC+CS+CV 
&\begin{tikzpicture}
\fill[sprgreen] (-.4,-.1) rectangle (.4,.1); 
\end{tikzpicture}
&
$\diamond$ \\
\hline
MC to \eqref{eq:basket-int-problem} with control variate & MC+CS+CV 
&\begin{tikzpicture}
\fill[viol] (-.4,-.1) rectangle (.4,.1); 
\end{tikzpicture}
&
$\diamond$ \\
\end{tabular}
\caption{\label{tab:acro} Different quadrature methods, their acronyms, colors 
and markers.}
\end{table}

\subsection{Performance of the sparse-grid methods}

\begin{figure}[htb]
\includegraphics[width=.48\textwidth]{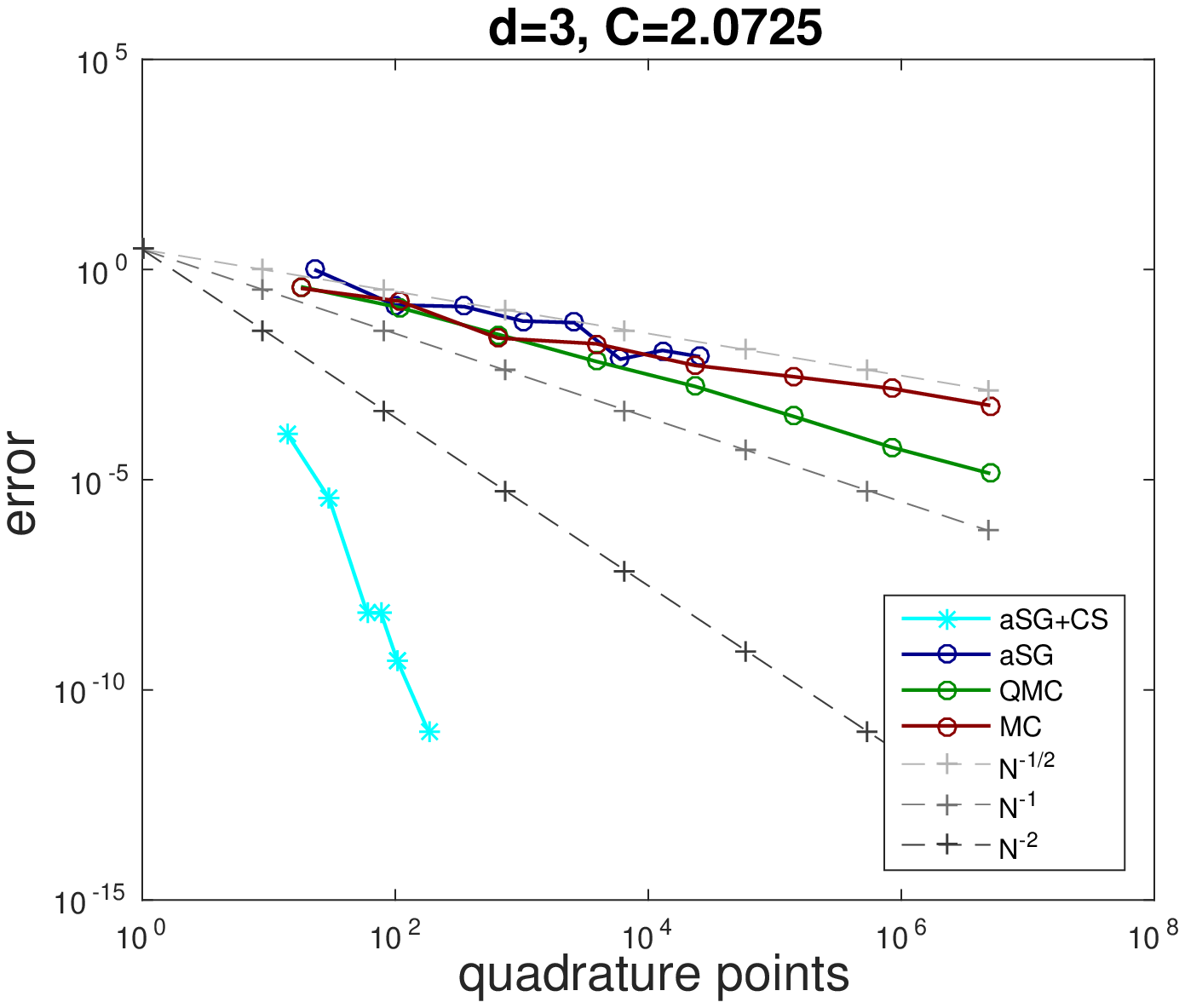}
\includegraphics[width=.48\textwidth]{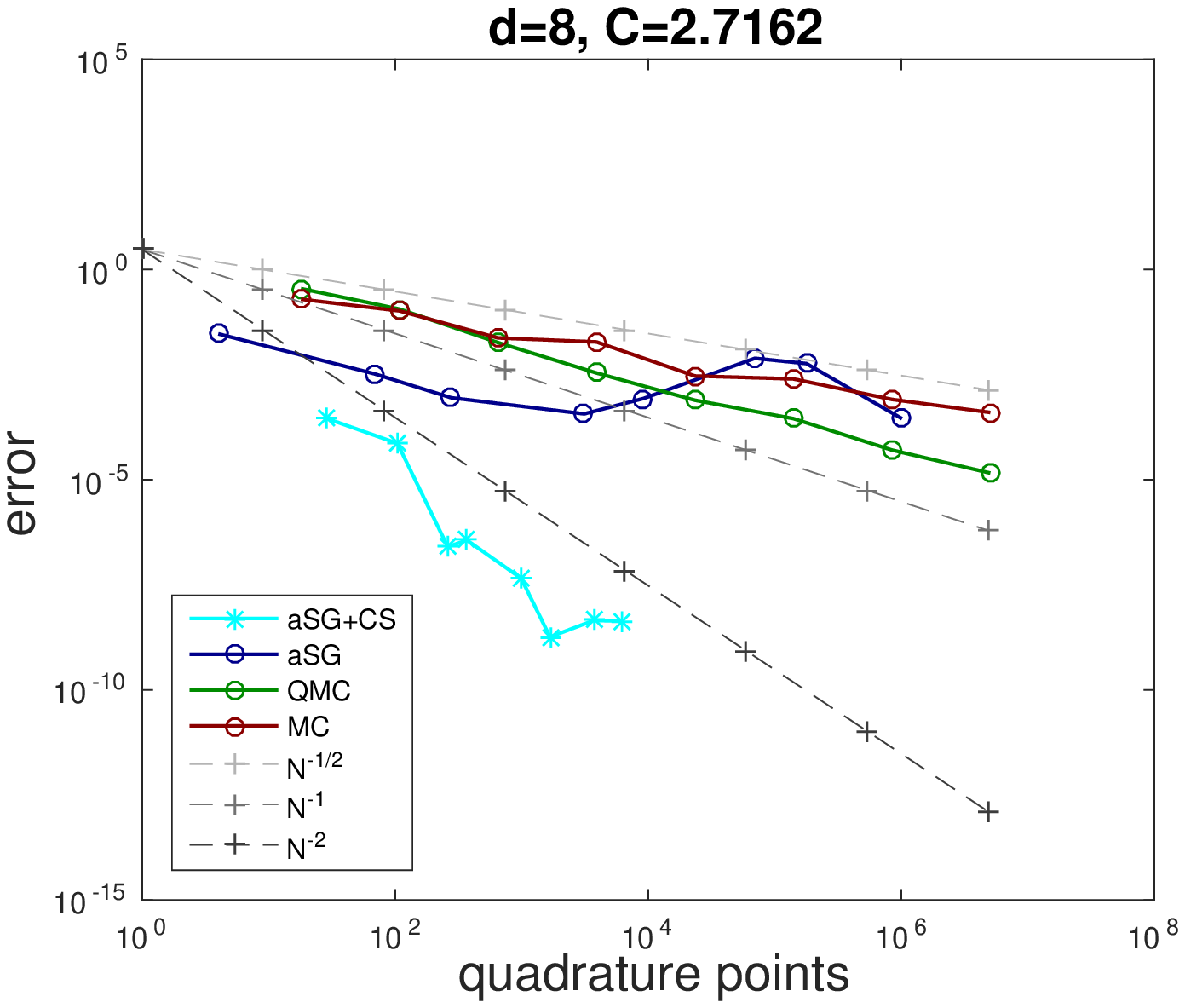}
\includegraphics[width=.48\textwidth]{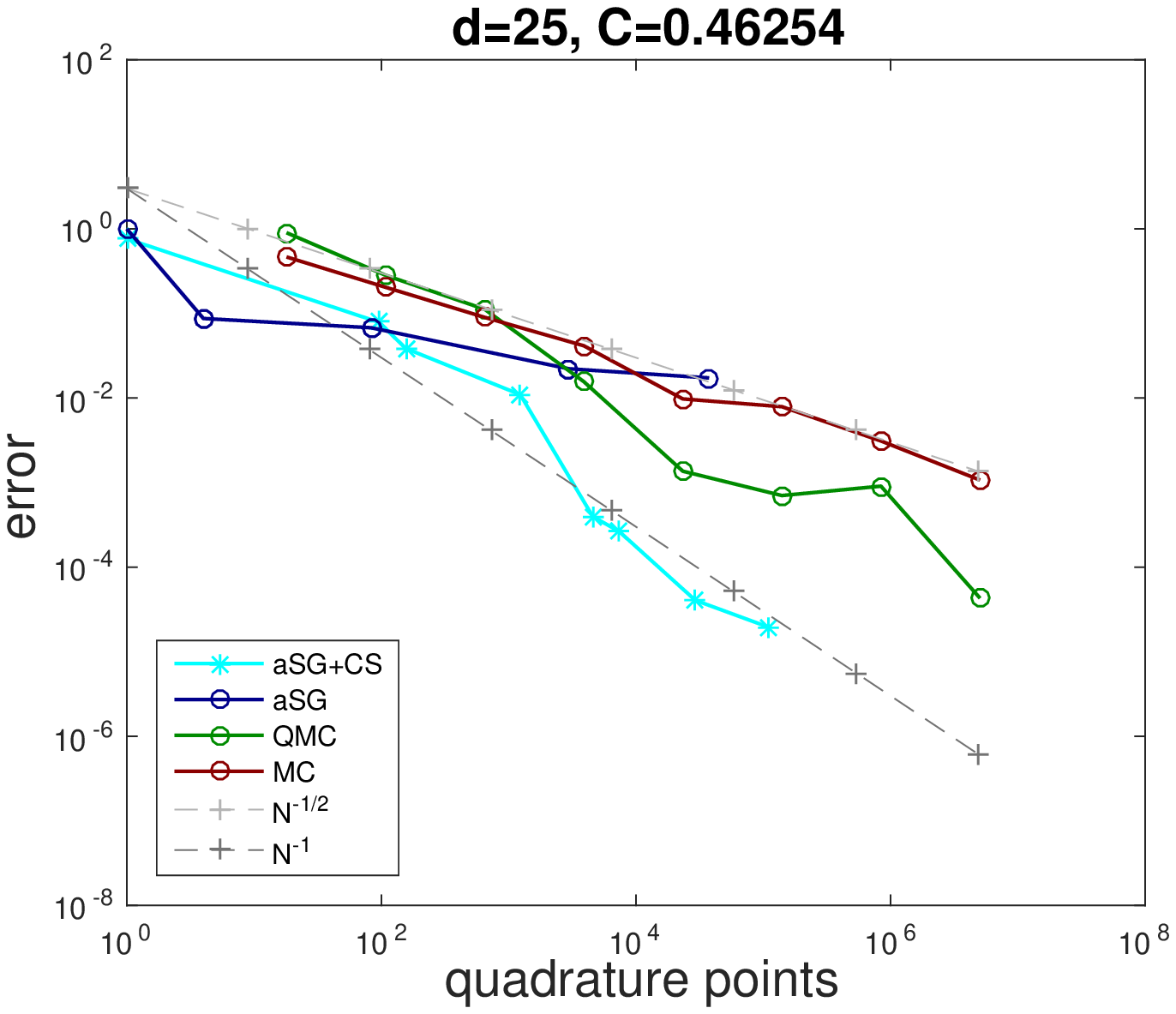}
\caption{\label{fig:vol3+1}Errors for \(d=3\), \(d=8\) and \(d=25\) with volatilities 
selected randomly from the interval \([0.3,0.4]\).}
\end{figure}

In this subsection, we investigate the convergence behaviour of the adaptive
sparse-grid method for the smoothened problem \eqref{eq:basket-int-problem}
(aSG+CS). Therefore, we apply the (aSG+CS) to our model problem in dimension
\(d=3\) in the ATM case, \(d=8\) in the ITM case and \(d=25\) in the OTM
case. {Note that the ITM case is the easiest case and the OTM 
case is the hardest for numerical computation. The results would slightly 
improve when considering the ITM case in $d=25$ dimensions. However, we would like to 
demonstrate that the smoothing works well even for the hardest case in moderately 
high dimensions.} 
As a reference solution, we use an adaptive sparse-grid quadrature to
determine \eqref{eq:basket-int-problem} with a very small tolerance,
i.e.~\(\varepsilon =10^{-11}\) for \(d=3\), \(\varepsilon =10^{-9}\) for
\(d=8\), and \(\varepsilon = 10^{-7}\) for \(d=25\) respectively. We use the
listed Genz-Keister points from \cite{HW08} as the sequence of underlying
univariate quadrature points. Unfortunately, there exist only nine different
Genz-Keister extensions and it might happen that a higher precision is needed
in a particular direction. In this case, we use Gau{\ss}-Hermite quadratures with a
successively higher degree of precision {for the consecutive 
members of the quadrature sequence}. The 1D Gau{\ss}-Hermite
points and weights can easily be constructed for an arbitrary degree of
precision by solving an associated eigenvalue problem; see \cite{GW69}
for the details.

To observe the convergence behaviour of the aSG+CS, we successively refine the
tolerance (e.g., from \(10^{-2}\) to \(10^{-9}\) for \(d=3\)) and compute the
relative error between the corresponding approximation to
\eqref{eq:basket-int-problem} and the reference solution.  To compare the
results with other methods and also to validate the reference solution, we
also apply an MC quadrature, a QMC quadrature and an adaptive sparse-grid
quadrature (SG) to the original problem \eqref{eq:call-gauss} and compare the
results with the reference solution as well.  Herein, we increase the number
of quadrature points for the (quasi-) Monte Carlo quadrature as \(3\cdot 6^q\)
for \(q=1,\ldots,8\) {which is adjusted for the sake of comparison with the 
aSG+CS method for the $25$-dimensional example}. In addition, we use \(20\) runs 
of the Monte-Carlo estimator on each level \(q\) and plot the median of the relative 
errors to the reference solution of these \(20\) runs. 

The results for \(d=3\), \(d=5\) and \(d=25\) are depicted in
Figure~\ref{fig:vol3+1}. As expected, the MC quadrature converges in each
dimension algebraically with a rate \(1/2\) to the reference solution,
while the rate of the QMC quadrature is close to \(1\), {despite the 
non-smooth integrand}. 
 The convergence of
the aSG is comparable to that of the MC for \(d=3\) and becomes worse for
\(d=8\) and \(d=25\). Hence, it is not very suitable to tackle the original
problem \eqref{eq:call-gauss} with aSG. In contrast to that, the aSG+CS outperforms all
other considered methods, especially for \(d=3\) and \(d=8\), in both
convergence rate and constant. For \(d=3\), the rate is exponential rather
than algebraic and the observed algebraic rate for \(d=8\) is \(2\). In
\(d=25\) dimensions, the rate deteriorates to \(1\) but the constant is still
around a factor \(35\) less than that of QMC.
\begin{remark}
  The convergence results are shown in terms of quadrature points. 
  {In case of adaptive sparse grids it might happen that the same 
  quadrature point appears multiple times since we evaluate tensor products of difference 
  quadrature formulas associated with multi-indices $\boldsymbol{\alpha}$. For example, 
  for the approximation with tolerance $10^{-9}$ for d=$3$, the aSG+CS method requires 
  $183$ quadrature points, but only $64$ of them are distinct, cf.~Figure \ref{fig:grid}. 
  Hence, the convergence results for adaptive sparse grid methods in terms of function 
  evaluations at quadrature points could be improved if the function evaluations would be 
  stored.} 
  However, it is also
  interesting to compare the computational times to see the overhead of the
  adaptive sparse-grid construction. In Table~\ref{tab:times}, we depict
  computational times in seconds and errors for the different quadrature
  methods at a comparable number of quadrature points for each dimension.  As
  we can deduce from these times, there is indeed a huge overhead for the
  aSG+CS. In dimension \(d=25\), for example, the computation of the adaptive
  sparse-grid method with around \(25 \%\) more quadrature points requires
  around \(23\) times the computation time in comparison to QMC. Nevertheless,
  the error of the aSG+CS is around a factor \(600\) smaller than that of QMC.
\begin{table}[hbt]
\begin{adjustbox}{max width=\textwidth}
\begin{tabular}{c|c|c|c|c|c|c|c|c|c}
& \multicolumn{3}{|c|}{aSG+CS} & \multicolumn{3}{|c|}{QMC} & \multicolumn{3}{|c}{MC}\\
\hline 
& time & error & points  & time & error & points & time & error & points\\
\hline
\(d=3\) & 0.0057 & 4.9 e-10 & 104 & 0.0016 & 1.25 e-1 & 108 & 0.0013 & 1.77 e-1 & 108\\
\hline
\(d=8\) & 0.3675 & 1.81 e-9 & 24622 & 0.0161 & 5.39 e-3 & 23328 & 0.0135 & 1.38 e-2 & 23328\\
\hline
\(d=25\) & 5.4283 & 1.04 e-6 & 174098 & 0.2409 & 6.18 e-4 & 139968 & 0.2188 & 1.29 e-3 & 139968\\
\end{tabular}
\end{adjustbox}
\caption{\label{tab:times}Computation times for the different quadrature methods}
\end{table}
Note that all the computations are done in MATLAB and that the evaluation of
the integrand is completely vectorized in case of the (Q)MC. Naturally, this
is not possible for the adaptive sparse-grid quadrature, since we adaptively
add indices to the index set corresponding to difference quadrature rules with
a relatively low number of quadrature points.  Although the evaluation of the
integrand in each difference quadrature rule is vectorized, we need to do this
several times during the algorithm. Hence, a MATLAB implementation is not the
most efficient one for adaptive sparse-grid quadratures or adaptive methods in
general and the overhead could be reduced drastically with an efficient
implementation in C, for example.

{Finally, note that once an adaptive sparse grid has been
  constructed, it could potentially be re-used for pricing of options with
  similar parameters. In  that case, of course, the overhead of constructing
  the sparse grid disappears completely.}
\end{remark}

\begin{figure}[htb]

\includegraphics[width=.48\textwidth]{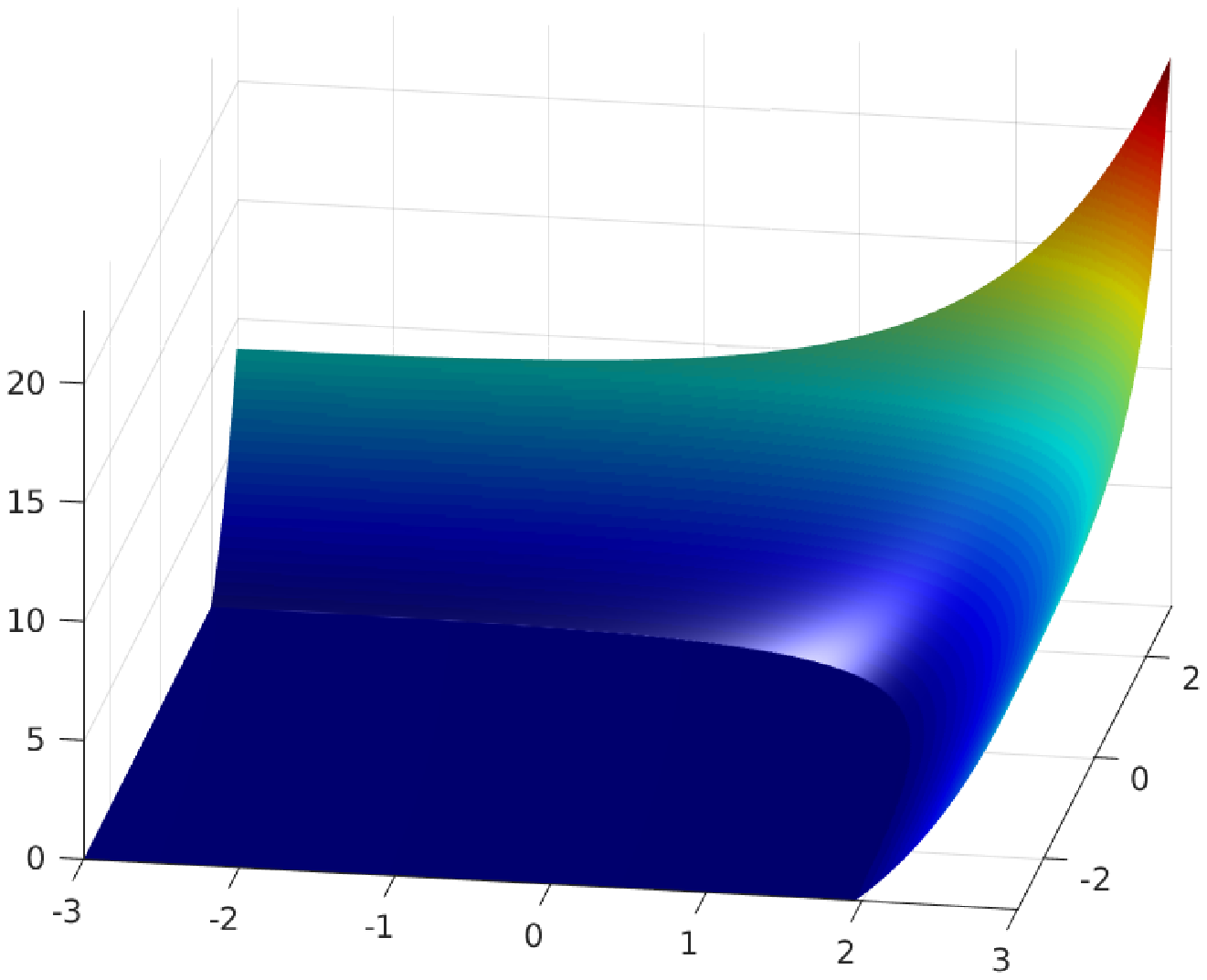}
\includegraphics[width=.48\textwidth]{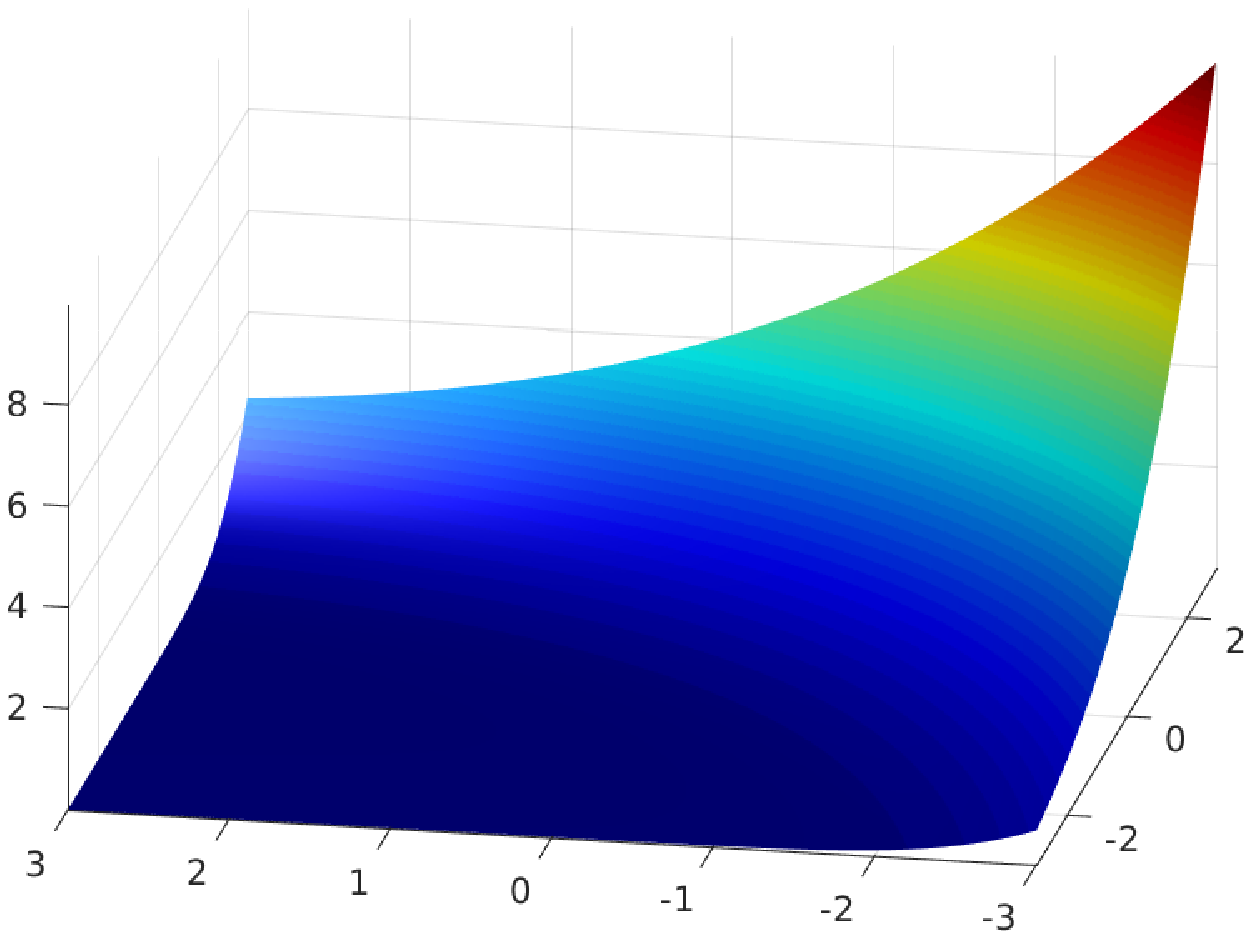}
\caption{\label{fig:kinksmooth2} 
Projection on the first two variables of the Integrands  in \eqref{eq:call-gauss} (left) and 
 \eqref{eq:basket-int-problem} (right) for \(d=25\) ( ``out of the money'').}
\end{figure}

In summary, we find that the adaptive sparse-grid quadrature applied to
\eqref{eq:basket-int-problem} accurately approximates the
value of a basket option. In particular, it significantly improves the
performance of the adaptive sparse-grid quadrature applied to
\eqref{eq:call-gauss}. This is due to the fact that the integrand in
\eqref{eq:basket-int-problem} is smooth while the integrand in
\eqref{eq:call-gauss} is not even differentiable. 
In order to corroborate the latter point, we illustrate in Figure \ref{fig:kinksmooth2} 
the projection on the first two variables of the \(8\)-dimensional 
integrand in \eqref{eq:call-gauss} and of the \(7\)-dimensional integrand 
in \eqref{eq:basket-int-problem} in the out of the money case. 
In addition to the smoothing effect, we further observe that the range of 
function values is reduced by around a factor $2.5$ for the latter integrand. 
Hence, it seems reasonable to additionally investigate the effects of 
the smoothing technique on the MC and QMC quadrature. 

\subsection{Smoothing effect for MC and QMC quadrature}

\begin{figure}[htb]
\includegraphics[width=.48\textwidth]{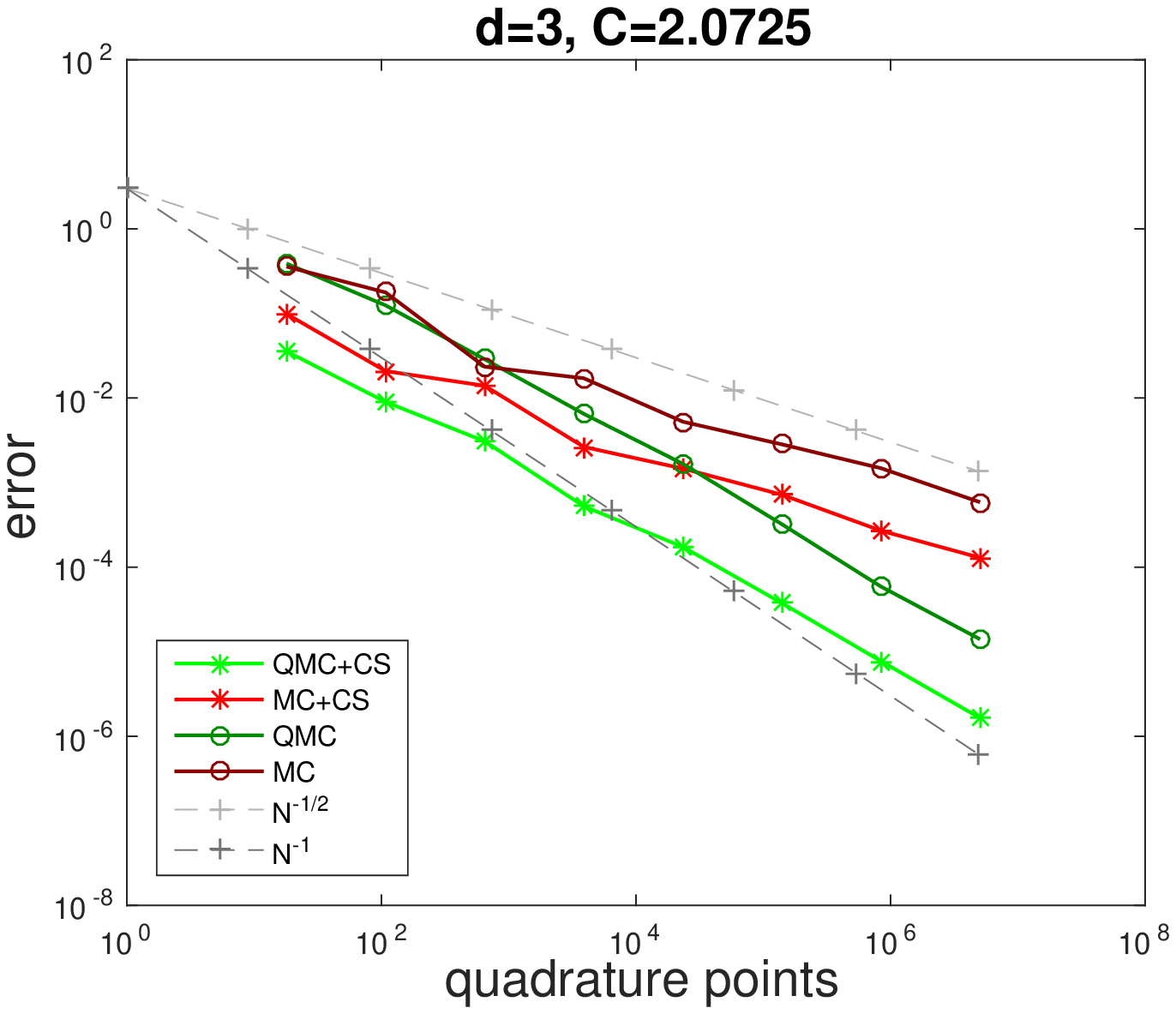}
\includegraphics[width=.48\textwidth]{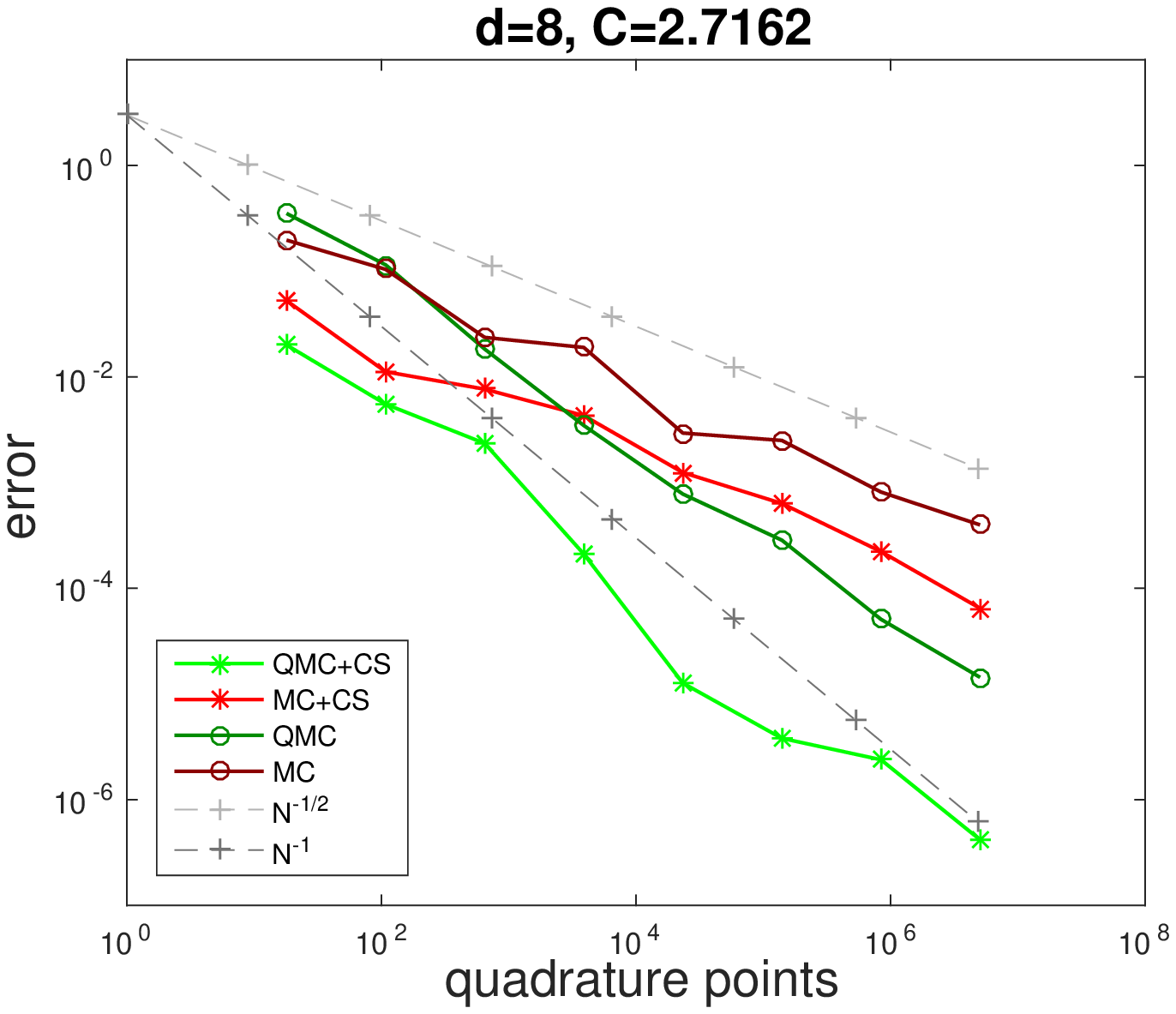}
\includegraphics[width=.48\textwidth]{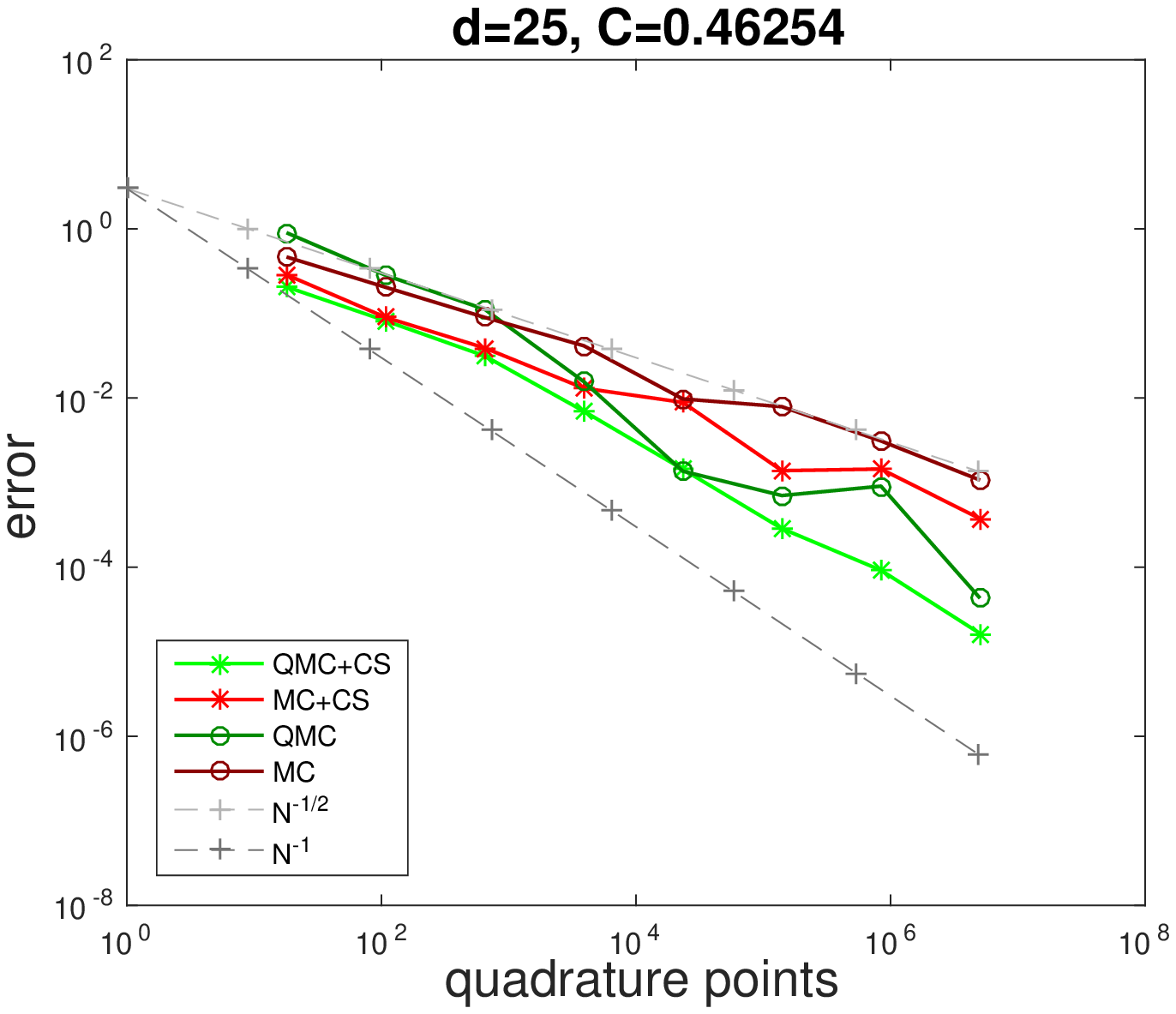}
\caption{\label{fig:vol1+3semc}Smoothing effect for the (Q)MC quadrature 
for \(d=3\), \(d=8\) and \(d=25\) with volatilities 
selected randomly from the interval \([0.3,0.4]\).}
\end{figure}

In this subsection, we examine the smoothing effect on the (Q)MC quadrature.
To that end, we apply the (Q)MC quadrature with the same number of quadrature
points as before (i.e.~\(3\cdot 6^q\) for \(q=1,\ldots,8\)) to approximate the
integral in \eqref{eq:basket-int-problem} and compare the results with those
of the (quasi-) Monte Carlo quadrature applied to \eqref{eq:call-gauss}. For
the Monte Carlo quadrature, we expect that the smoothing effect is not as
strong as for the sparse-grid quadrature. Nevertheless, the convergence
constant might be improved since we determined a conditional expectation to
deduce \eqref{eq:basket-int-problem} from \eqref{eq:call-gauss}, which should
decrease the variance of the integrand.  Figure \ref{fig:vol1+3semc}
corroborates that the smoothing has the expected effect on the Monte Carlo
quadrature, but the effect seems to diminish in higher dimensions.  In case of
the QMC quadrature, the smoothing does not effect the convergence rate but it
does improve the convergence constant as well. Moreover, the effect is even
stronger as for the Monte Carlo quadrature.  The convergence constant of the
QMC quadrature relies on the variation of the integrand and hence we suspect a
larger decrease in the variation of the integrand than in the variance. This
may be explained by the fact that the variation of a function can be
calculated from the first mixed derivatives.  Thus, the variation strongly
depends on the smoothness of the integrand, in particular this dependence is
stronger than for the variance of the integrand.

\subsection{Acceleration by using a sparse-grid interpolant as a control variate}

Another option to exploit the smoothness of the integrand is to combine a
(Q)MC quadrature with a sparse-grid approximation.  To that end, we construct
a sparse-grid interpolant on the {smoothened} 
integrand in \eqref{eq:basket-int-problem},
that is we use sparse-grid quadrature nodes as interpolation points and
employ this interpolant as a control variate. To explain the concept of a
control variate, let us consider the integration problem of a function
\(f\colon \R^d \to \R\) and an approximation \(g\colon \R^d \to \R\) on
\(f\). We assume that it is easy to calculate \(E(g)\isdef\int_{\R^d} g(x) \d
x\). Then, we rewrite the integral as
\begin{equation}
\label{eq:cont}
\int_{\R^d} f(x) \d x = \int_{\R^d} f(x)-g(x) \d x +E(g). 
\end{equation}
Instead of using a (Q)MC estimate of the integral on the left-hand side of
\eqref{eq:cont}, we estimate the integral on the right-hand side. This means that 
the function \(g\colon \R^d \to \R\) serves as a control variate, see for example 
\cite{G04} for a more detailed description. Of course, the quality of the
control variate depends on how much the variance or the variation of \(f-g\)
is reduced compared with the variance or variation of \(f\).  Hence, it is
closely connected to the approximation quality of \(g\) on \(f\). 

{In our examples, we use a total degree sparse-grid interpolant as a control 
variate. To describe that in more detail, let us denote by 
\(\mathcal{I}_j\colon C(\mathbb{R})\to \mathcal{P}_{N_j}\) the interpolation 
operator at the \(N_j\) quadrature points of the Gau{\ss}-Hermite quadrature with 
\(N_j\) points. Then our sparse-grid interpolant \(g\) is given by 
\[
g = \sum_{\alpha\in \N_0^d: \|\alpha\|_{1} \le 2} \bigotimes_{j=1}^d(\mathcal{I}_{\alpha_j}-\mathcal{I}_{\alpha_j-1}) f
\]
with the convention \(\mathcal{I}_{-1}\equiv 0\) and the numbers of quadrature 
points \(N_0=1\), \(N_1=3\) and \(N_2=5\).}

\begin{remark}
  The evaluation of this sparse-grid interpolant at the (Q)MC quadrature
  points becomes quite costly, especially in high dimensions. Most likely,
  more efficient control variates could be used, for example by including only
  the five most important dimensions in the sparse-grid interpolant.
  Nevertheless, the aim here is to demonstrate that it is possible, due to the
  smoothing, to significantly improve the convergence behaviour of the (Q)MC
  quadrature by a sparse-grid control variate on a relatively low level but we
  do not incorporate an efficiency analysis in terms of computational times
  here.
\end{remark}

\begin{figure}[htb]
\includegraphics[width=.48\textwidth]{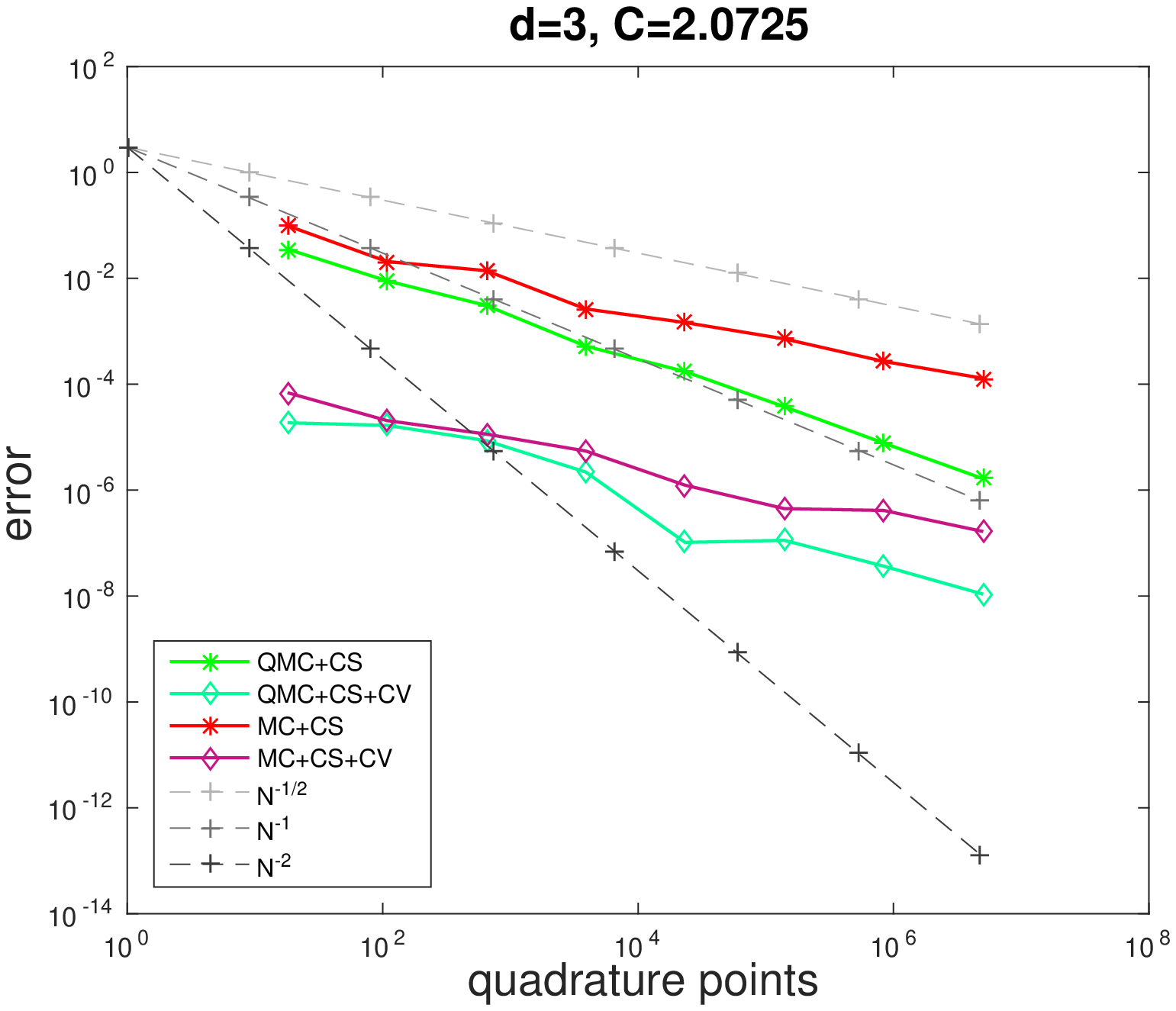}
\includegraphics[width=.48\textwidth]{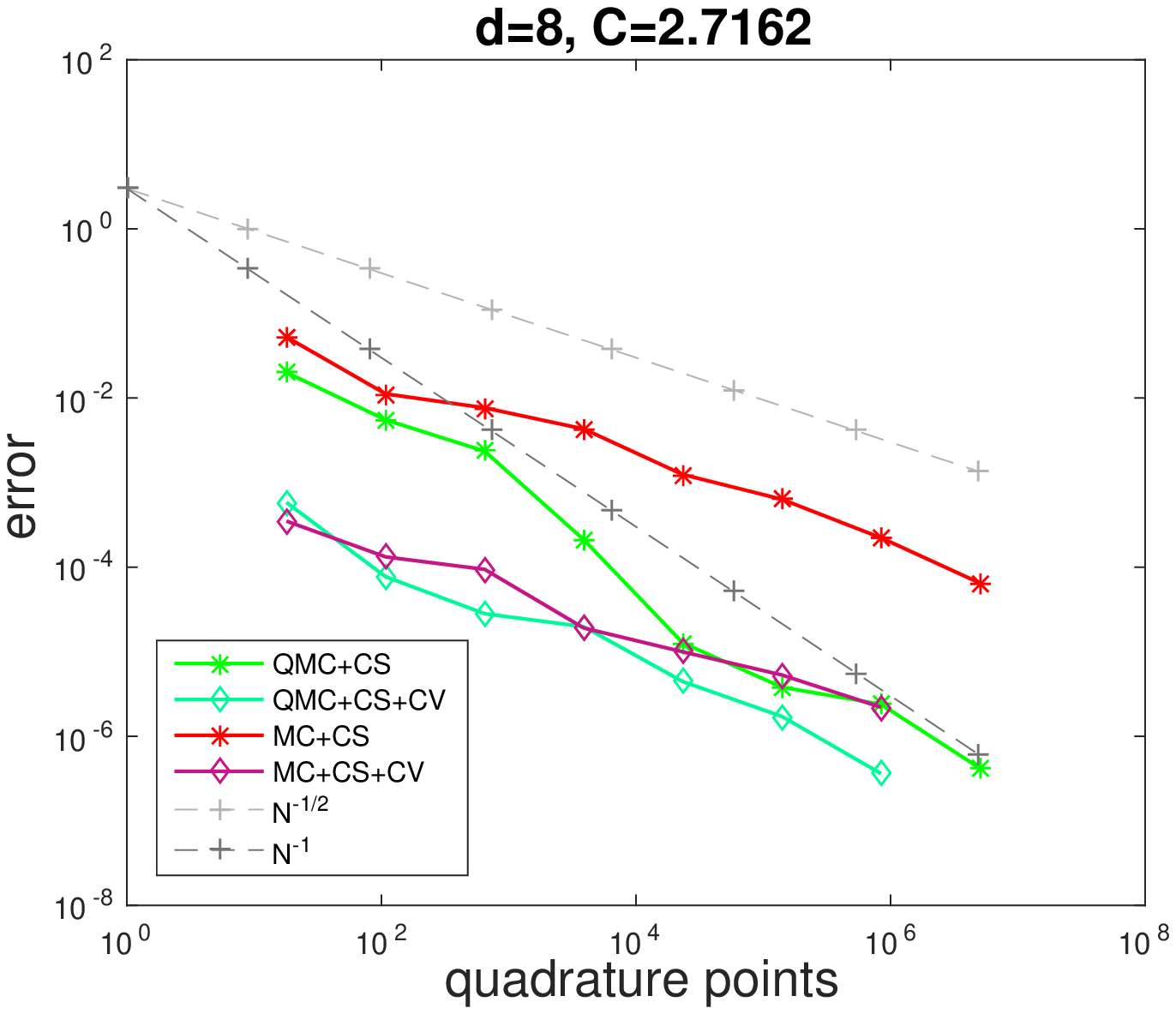}
\includegraphics[width=.48\textwidth]{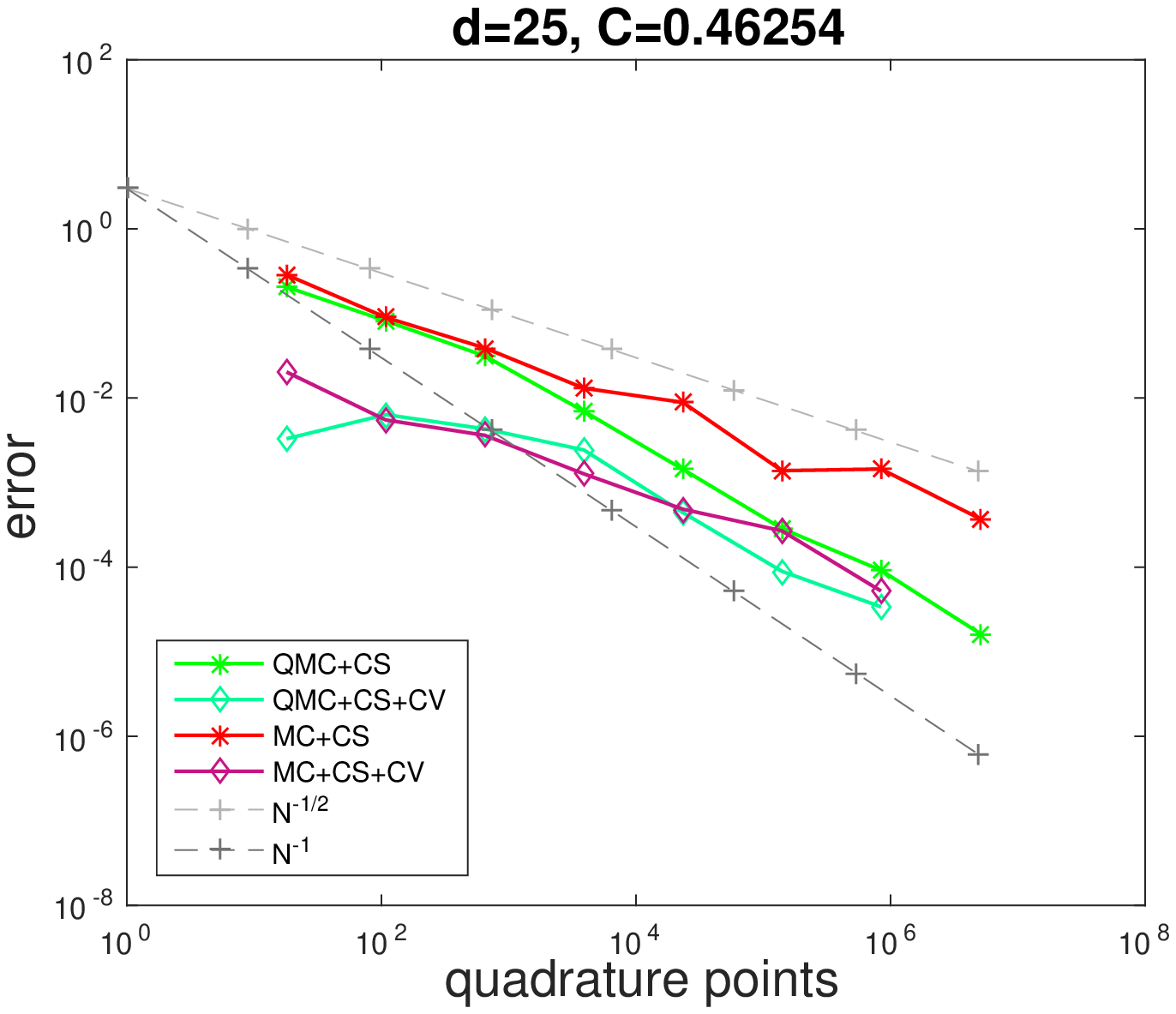}
\caption{\label{fig:vol1+3cont} Acceleration of the (Q)MC quadrature with a
  sparse-grid control variate for \(d=3\), \(d=8\) and \(d=25\) with
  volatilities selected randomly from the interval \([0.3,0.4]\).}
\end{figure}

The results of employing such a function as a control variate to improve the
convergence of the (Q)MC+CS quadrature are visualized in
Figure~\ref{fig:vol1+3cont}.  The error reduction is quite impressive for both
methods. In particular, the error of the MC+CS quadrature is reduced by a factor
of approximately \(10^3\) in \(d=3\) dimension and still by a factor \(10^2\)
in \(d=8\) and by a factor \(30\) in \(d=25\) dimensions while the convergence rate is preserved.
In the case of the QMC+CS quadrature, the constant is reduced by a similar factor
as in the Monte Carlo case. Although, the convergence rate seems to be slightly worse
in comparison with the QMC+CS quadrature, the quasi-Monte
Carlo quadrature with a sparse-grid control variate achieves the best error
behaviour of the four considered methods in Figure~\ref{fig:vol1+3cont} at least 
for \(d=3\) and \(d=8\).

\section{Numerical example 2: Multivariate Variance-Gamma setting}
\label{sec:numerical-example-2}

In our second numerical example, we consider the pricing of a basket option in
a multi\-variate Variance-Gamma model as introduced in \cite{LS06}. Therefore,
we recall that the multivariate extension of the univariate asset price
process \eqref{eq:var-gamma-price} is described as follows (cf.~\cite{LS15}
and Section~\ref{sec:introduction} above):
\begin{equation}
\label{eq:multvargam}
S_t^i = S_0^i\exp\Big((r+\omega_i)t+\theta_i \gamma_t+\sigma_i W_i(\gamma_t) \Big) 
\end{equation}
with 
\[
\omega_i = \frac{1}{\nu} \log\left(1-\frac 1 2 \sigma_i^2\nu -\theta_i\nu\right).
\]
We also incorporate here the deterministic interest rate \(r\) in order to
compare our results with those from \cite{LS15}.  The correlated
\(d\)-dimensional Brownian motion \(W\) in \eqref{eq:multvargam} is as in
\eqref{eq:GBM} given by its correlation matrix
\(\rho=\left(\rho_{i,j}\right)_{i,j=1}^d\) and its volatility vector
\(\sigma=[\sigma_{1},\ldots, \sigma_d]^\top\). The Gamma process \(\gamma_t\)
is independent from \(W\) and described by the parameter \(\nu\) via its
density function
\[
f_{\gamma_t}(y) = \frac{y^{1/\nu-1}}{\nu^{t/\nu}\Gamma(t/\nu)} e^{-y/\nu}.
\]
The calculation of a European basket call option at time \(T\) 
under the Variance-Gamma model leads then to 
\begin{equation}
\label{eq:baskvargam}
C_{\mathcal{B}} \coloneqq \int_{0}^{\infty}e^{-rT}E\left[ \left( \sum_{i=1}^d c_i S^i_T - K\right)^+
  \Bigg|\gamma_T=y\right]f_{\gamma_T}(y)\d y.
\end{equation}
Herein, the integrand is for every fixed \(y\ge 0\) just the value of a basket
call option according to \eqref{eq:basket-option}. Let us define
\begin{equation}\label{eq:setvargam}
\begin{aligned}
  w_i &= c_i S^i_0 e^{(r+\omega_i) T}, \quad i=1, \ldots, d, \\
  \Sigma_{i,j} &= \sigma_i \sigma_j \rho_{i,j} T, \quad i,j = 1, \ldots, d.
\end{aligned}
\end{equation}
Then, we can as in \eqref{eq:call-gauss} rewrite the integrand in terms of a
\(d\)-dimensional Gaussian vector $X^{y} = (X_1^{y}, \ldots, X_d^{y}) \sim
\mathcal{N}(0, y\cdot\Sigma)$ to
\[
E\left[ \left( \sum_{i=1}^d c_i S^i_T - K\right)^+
 \Bigg|\gamma_T=y\right]=E\left[ \left( \sum_{i=1}^d e^{\theta_i y}w_i e^{X_i} - K\right)^+
 \Bigg|\gamma_T=y\right]. 
\]
Hence, we can apply the technique from Section \ref{sec:smoothing-payoff} to
equation \eqref{eq:baskvargam}. Therefore, we recall the decomposition of the
matrix \(\Sigma=VDV^{\top}\) according to Lemma \ref{lem:rank-reduction}. The
first row of the matrix \(V\) is the vector \(v=[1,\ldots,1]^\top\) and we
denote the entries of the diagonal matrix by
\(D=\operatorname{diag}(\lambda_1^2,\ldots,\lambda_d^2)\).  Continuing in the
same fashion as in Section \ref{sec:smoothing-payoff}, we end up with the
equivalent integration problem (cf.~\eqref{eq:basket-int-problem}),
\begin{equation}
\label{eq:smoothvargam}
\begin{aligned}
C_{\mathcal{B}}&= \int_{0}^{\infty}e^{-rT}
E\left[ C_{BS}\left(h_y\left( \sqrt{y\barD} Z \right)
      e^{y\lambda_1^2/2}, K, \sqrt{y}\lambda_1 \right) \right]
      f_{\gamma_T}(y)\d y,\\  
Z &\sim \mathcal{N}\left(0, I_{d-1} \right), \  \sqrt{\barD} =
  \diag(\lambda_2, \ldots, \lambda_d).
  \end{aligned}
\end{equation}
Herein, the function \(h_y\) is given similar as in \eqref{eq:h-def} by 
\[
  h_y(z_2, \ldots, z_d) \coloneqq \sum_{i=1}^d e^{\theta_i y}w_i 
  \exp\left(\sum_{j=2}^d V_{i,j} z_j \right), \quad \barz = (z_2, \ldots, z_d) \in \R^{d-1}.
\]
Note that the integrand in \eqref{eq:smoothvargam} is very easy to calculate
with respect to \(y\) since we only need to incorporate the factor
\(e^{\theta_i y}\) in front of each weight \(w_i\) and scale the matrix \(D\)
by \(y\). Thus, the decomposition of the correlation matrix in view of Lemma
\ref{lem:rank-reduction} only has to be computed once although the correlation
matrix of the Gaussian vector \(X^{y}\) depends on the parameter \(y\).

\begin{figure}[htb]
\includegraphics[width=.48\textwidth]{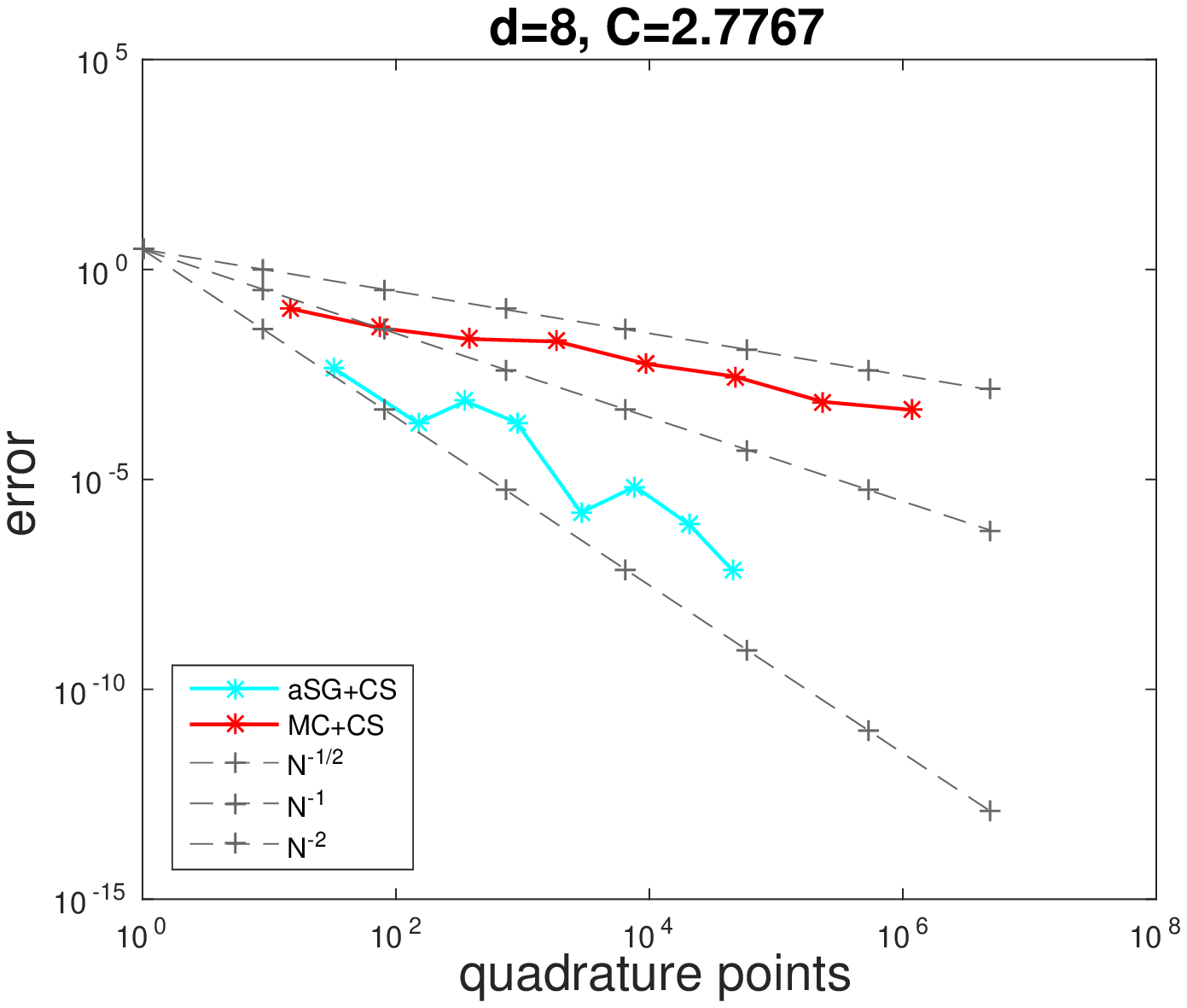}
\includegraphics[width=.48\textwidth]{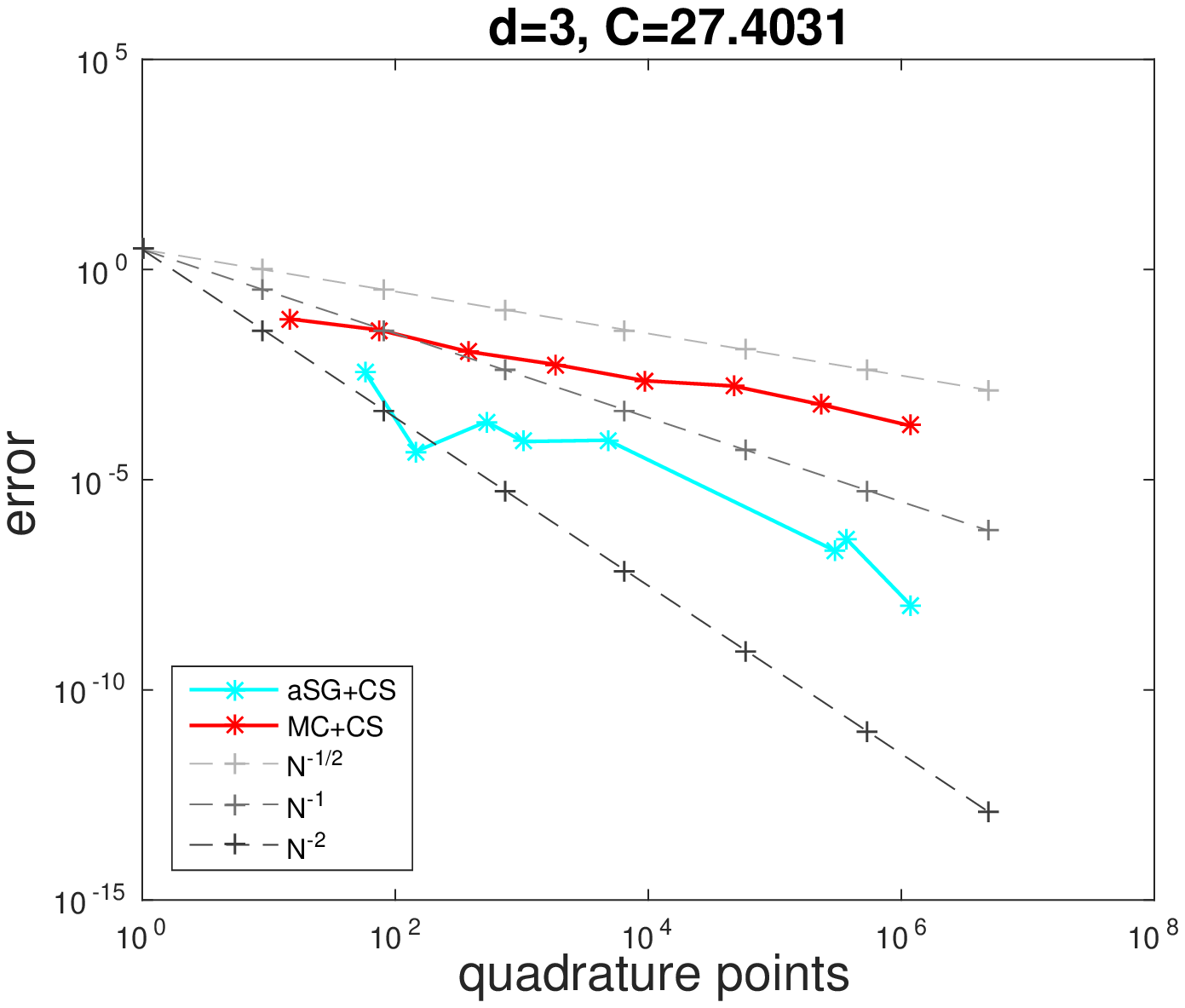}
\caption{\label{fig:vargam} Errors of an ATM basket call under a
  Variance-Gamma model with parameters \(\nu=0.3\) and \(\theta_i\in
  [-0.1,0.05]\) for \(d=8\) assets on the left and for an example from
  \cite{LS15} with \(d=3\) assets on the right.}
\end{figure}

In Figure \ref{fig:vargam}, we present two examples for basket option pricing
under the Variance-Gamma model. The first picture on the left-hand side
depicts the error of the calculation of an ATM basket call
(cf.~\eqref{eq:baskvargam}). We choose the parameters \(r=0\) and \(\nu=0.3\)
in \eqref{eq:multvargam} deterministically and randomly select
\(\theta_i\in[-0.1,0.05]\). Moreover, the correlation matrix \(\rho\), the
volatilities \(\sigma_i\) and the initial values \(S_0^i\) in \(d=8\) dimensions 
are constructed as in Section \ref{sec:mult-black-schol}.  We compare the 
convergence of the MC quadrature and the adaptive sparse-grid quadrature 
for the \(d\)-dimensional integral in \eqref{eq:smoothvargam}.  Note that the
integration domain and the density function in \eqref{eq:smoothvargam} are
given by
\[
\Gamma = [0,\infty]\times \R^{d-1},\qquad p(y,z_2,\ldots,z_d) = 
f_{\gamma_T}(y) \cdot \frac{1}{(2\pi)^{d/2}}\exp\left(\half\sum_{i=2}^d z_i^2\right). 
\]
Hence, we use \(d\)-dimensional random vectors where the first component is
distributed with respect to \(f_{\gamma_T}\) and independent to the remaining
\(d-1\) variables which are normally distributed and independent as well, as
samples for the Monte-Carlo quadrature. In case of the adaptive sparse-grid
quadrature, we apply tensor products of difference quadratures rules
(cf.~\eqref{eq:gensg}), where we use differences of generalized Gauss-Laguerre
quadrature rules as the quadrature sequence in the first variable. In the
remaining variables, we set the univariate quadratures as in Section
\ref{sec:mult-black-schol}. Afterwards, we select the indices which are
included in the sparse-grid adaptively as described in Section
\ref{sec:sparsegridconstr}.  As expected, the MC method converges exactly with
a rate \(N^{-1/2}\).  Moreover, the result demonstrates that the adaptive
quadrature outperforms the MC method even in this Variance-Gamma example with
an observed rate of nearly \(N^{-2}\).

The second numerical example is taken from the recent work \cite{LS15} and
stems originally from a parameter fitting of the Variance-Gamma model in
\cite{MCC98}. It describes a 3D model as in
\eqref{eq:multvargam} where \(\theta=[-0.1368, -0.056, -0.1984]^\top\),
\(\sigma=[0.1099, 0.1677, 0.0365]^\top\) and \(S_0=[100, 200, 300]^\top\).
Additionally, the weight vector \(c=[1/3, 1/6, 1/9]^\top\) and the correlation
matrix
\[\rho = 
\begin{pmatrix}
1 &0.6 &0.9 \\
 0.6&1 & 0.8\\
0.9 &0.8 &1 
\end{pmatrix} 
\]
were used. In \cite{LS15}, several different settings for the parameter
\(\nu\) and the strike price \(K\) are considered. We restrict ourselves to
the setting \(\nu=0.5\) and \(K=75\), which corresponds to an ITM basket.  On
the right-hand side of Figure \ref{fig:vargam} the convergence results for the
MC and the adaptive approaches are shown.  We observe that the MC quadrature
converges as before. Although the convergence of the adaptive sparse-grid
quadrature is still better than that of the Monte Carlo method, an exponential
rate as could be expected for such a low-dimensional example cannot be
obtained.  This deterioration in the convergence rate does not depend on the
Variance-Gamma setting but, as mentioned earlier, there is a
connection of the smoothing to the entries of the diagonal matrix \(D\) from
Lemma \ref{lem:rank-reduction}. For the considered example, the matrix \(D\)
has the entries \(\lambda_1^2=0.00023\), \(\lambda_2^2=0.03432\) and
\(\lambda_3^2=0.00652\). In particular, the small value of \(\lambda_1^2\)
explains the relatively small smoothing effect. In view of \eqref{eq:CSwithv}, 
the vector \(v=\mathbf{1}\) in Lemma
\ref{lem:rank-reduction} can be replaced by any other vector \(0\neq v \in
\{0,1\}^d\) to obtain a closed-form expression in Lemma
\ref{lem:BS-formula}.
\begin{figure}[htb]
\includegraphics[width=.48\textwidth]{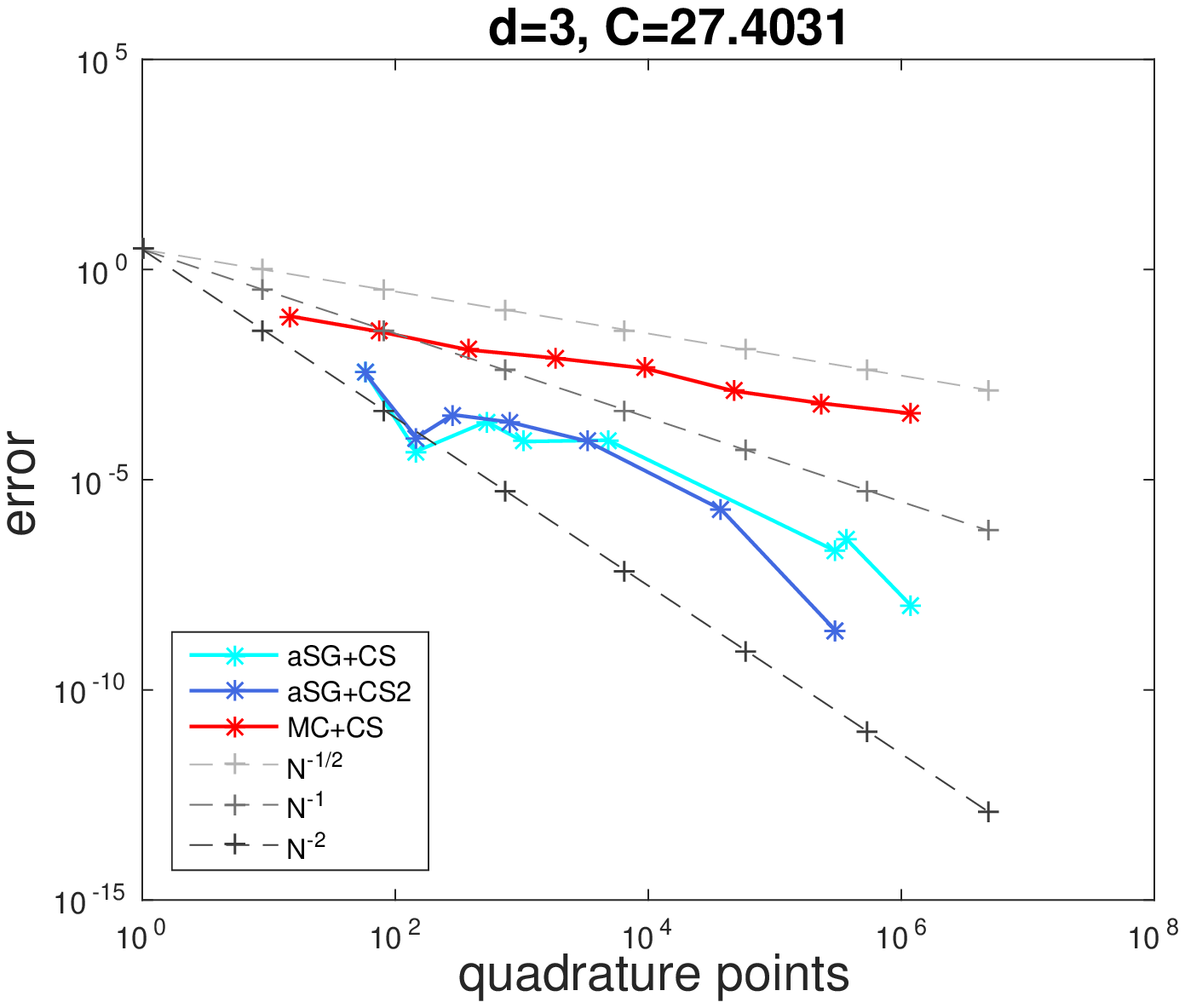}
\includegraphics[width=.48\textwidth]{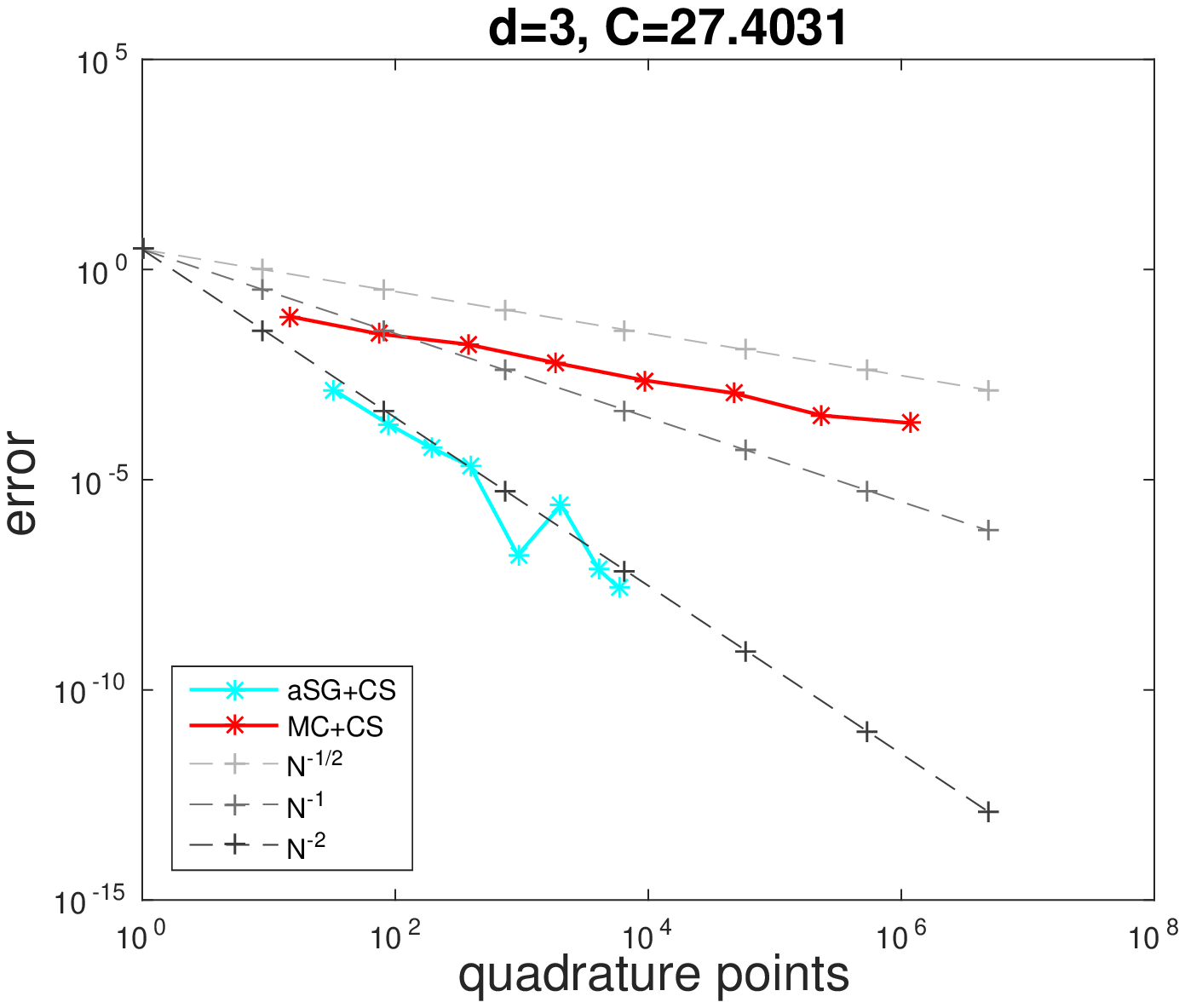}
\caption{\label{fig:vargam2} Errors for the example from \cite{LS15} with
  \(d=3\) assets. On the left-hand side, we included the convergence when the
  vector \(\mathbf{1}\) is replaced by \(v=[1,1,0]^\top\).  On the right-hand
  side, we used the modified volatility \(\sigma_3=0.1365\).}
\end{figure}
Therefore, we also investigated the convergence behaviour when we use a
vector \(v\neq \mathbf{1}\). {We tested all possible choices 
of vectors \(v\in\{0,1\}^3 \setminus \{0\}\) 
and observed that \(v=[1,1,0]^\top\) is best possible choice for this example. 
On the left-hand side of Figure \ref{fig:vargam2}, we compare the result from 
the right-hand side of Figure \ref{fig:vargam} with this best choice of \(v\) 
denoted by (aSG+CS2) and observe an improved convergence, which comes together 
with a size of \(\lambda_{1,v}^2=0.00109\), i.e.~\(\lambda_{1,v}^2\) 
is five times as high as \(\lambda_{1}^2\).} Nevertheless, $\lambda_{1,v}^2$ is still quite
small compared with \(\lambda_{2,v}^2=0.03294\) and, hence, the improvement in
the convergence is not that extraordinary. This leads to the supposition that
the considered example is not that well suited for our proposed method.  In
particular, the low value of \(\sigma_3=0.0365\) compared with the other
volatilities seems to have a negative effect on the smoothing.  Hence, we
tested this example also for the modified volatility \(\sigma_3=0.1365\).  The
results for this case also depicted on the right-hand side of figure
\ref{fig:vargam2} show a drastically improved convergence. Furthermore, the
entries of \(D\) are given by \(\lambda_1^2=0.01034\), \(\lambda_2^2=0.02255\)
and \(\lambda_3^2=0.00526\), which demonstrates the influence of the
differences in the volatilities on the size of \(\lambda_1^2\) and thus on
the smoothing.

\section*{Conclusions}
\label{sec:conclusions}

In the context of basket options, we show that the inherent smoothing property
of a Gaussian component of the underlying can be used to mollify the integrand
(payoff function) without introducing an additional bias. Having obtained a
smooth integrand, we can now directly apply (adaptive) sparse-grid methods. We
observe that these methods are highly efficient in low and moderately high
dimensions. For instance, the error can be improved by two orders of magnitude
in dimension $25$ compared to (Q)MC methods. In dimension $3$, we even obtain
exponential convergence. {While the actual benefit of the
  smoothing method is very much problem dependent, we observed good results
  for the adaptive sparse grid method for the smoothed integrand up to
  dimension $35$ in the examples we considered.}
We have also discussed improvements for MC and QMC
methods by introducing the smoothed payoff. In the Monte Carlo case, we do not
observe a significant improvement in the computational error, as the variance
reduction seems rather negligible. For QMC methods---Sobol numbers, to be more
precise---we do see considerable improvements in the constant. As expected,
the rate stays the same.

We note that the method employed in this work is not restricted to basket
options in a multivariate Black-Scholes or Variance-Gamma setting, but can be
generalized considerably. For instance, each step of an Euler discretization
of an SDE corresponds to a Gaussian mixture model. Hence, the conditional
expectation of the final integrand, given all the Brownian increments except for the
last one, is in the form of a Gaussian integral of the payoff function
w.r.t.~to a normal distribution with possibly complicated mean vector and
covariance matrix. If this integral can be computed explicitly, then we can
directly obtain mollification of the payoff without introducing a bias.

Even if the integral cannot be computed in closed form, there may be use cases
for employing numerical integration. For instance, in the basket option case,
a fast and highly accurate numerical integration of the 1D log-normal
integral, coupled with regression/interpolation (to avoid re-computation of
the one-dimensional integral for each new (sparse) gridpoint) could turn out
to be more efficient than a numerical integration technique applied to the
full problem.

Finally, note that there are also clear limitations of the technique. For
instance, consider a variation of the basket options studied in this work,
namely a best-of-call option. Here, the payoff is given by
\begin{equation*}
  \left( \max_{i=1, \ldots d} S^i_T - K \right)^+
\end{equation*}
for log-normally distributed, correlated variables $S^i_T$ (in the
Black-Scholes setting). Clearly, we can use Lemma~\ref{lem:rank-reduction} to
construct a common normal factor $Y$ and other factors $Y_1, \ldots, Y_d$ (all
jointly normal, $Y$ independent of the rest), such that $S^i_T =
e^{Y}e^{Y_i}$. Therefore, for the price of the best-of-call option, we obtain
\begin{equation*}
  E\left[ \left( \max_{i=1, \ldots d} S^i_T - K \right)^+ \right] = E\left[
    \left( e^Y \max_{i=1, \ldots d} e^{Y_i} - K \right)^+ \right].
\end{equation*}
Taking the conditional expectation, we obtain the Black-Scholes formula
applied at $\max_{i=1, \ldots d} e^{Y_i}$, which is still a non-smooth
payoff. The mollification can only remove one source of irregularity in this
case, not all of them.  Indeed, as currently presented in this work, the
conditional expectation step is most effective when the discontinuity surface
of the option's payoff has co-dimension one.

\medskip
\paragraph{Acknowledgements}
  R. Tempone is a member of the KAUST Strategic Research Initiative,
  Center for Uncertainty Quantification in Computational Sciences and
  Engineering. C. Bayer and M. Siebenmorgen received support for research
  visits related to this work from R. Tempone's KAUST baseline funds.

\bibliographystyle{plain}
\bibliography{smoothing}

\begin{thebibliography}{10}

\bibitem{ACN13a}
Nico Achtsis, Ronald Cools, and Dirk Nuyens.
\newblock {\em Conditional Sampling for Barrier Option Pricing Under the Heston
  Model}, pages 253--269.
\newblock Springer Berlin Heidelberg, Berlin, Heidelberg, 2013.

\bibitem{ACN13b}
Nico Achtsis, Ronald Cools, and Dirk Nuyens.
\newblock Conditional sampling for barrier option pricing under the {LT}
  method.
\newblock {\em {SIAM} J. Financial Math.}, 4(1):327--352, 2013.

\bibitem{A02}
W.~Alt.
\newblock {\em Nichtlineare Optimierung: Eine Einf{\"u}hrung in Theorie,
  Verfahren und Anwendungen}.
\newblock Vieweg+Teubner Verlag, 2002.

\bibitem{BFL15}
Christian Bayer, Peter~K. Friz, and Peter Laurence.
\newblock On the probability density function of baskets.
\newblock In {\em Large deviations and asymptotic methods in finance}, volume
  110 of {\em Springer Proc. Math. Stat.}, pages 449--472. Springer, Cham,
  2015.

\bibitem{BL14}
Christian Bayer and Peter Laurence.
\newblock Asymptotics beats {M}onte {C}arlo: the case of correlated local vol
  baskets.
\newblock {\em Comm. Pure Appl. Math.}, 67(10):1618--1657, 2014.

\bibitem{BL15}
Christian Bayer and Peter Laurence.
\newblock Small-time asymptotics for the at-the-money implied volatility in a
  multi-dimensional local volatility model.
\newblock In {\em Large deviations and asymptotic methods in finance}, volume
  110 of {\em Springer Proc. Math. Stat.}, pages 213--237. Springer, Cham,
  2015.

\bibitem{BG04}
H.-J. Bungartz and M.~Griebel.
\newblock Sparse grids.
\newblock {\em Acta Numerica}, 13:147--269, 2004.

\bibitem{Caf98}
R.~Caflisch.
\newblock Monte {C}arlo and quasi-{M}onte {C}arlo methods.
\newblock {\em Acta Numerica}, 7:1--49, 1998.

\bibitem{Cools2003445}
Ronald Cools.
\newblock An encyclopaedia of cubature formulas.
\newblock {\em Journal of Complexity}, 19(3):445 -- 453, 2003.
\newblock Oberwolfach Special Issue.

\bibitem{dick2013high}
Josef Dick, Frances~Y Kuo, and Ian~H Sloan.
\newblock High-dimensional integration: the quasi-monte carlo way.
\newblock {\em Acta Numerica}, 22:133--288, 2013.

\bibitem{Doust}
Paul Doust.
\newblock Two useful techniques for financial modelling problems.
\newblock {\em Applied Mathematical Finance}, 17(3):201--210, 2010.

\bibitem{Duf04}
Daniel Dufresne.
\newblock The log-normal approximation in financial and other computations.
\newblock {\em Adv. in Appl. Probab.}, 36(3):747--773, 2004.

\bibitem{FLLLT99}
Eric Fourni{\'e}, Jean-Michel Lasry, J{\'e}r{\^o}me Lebuchoux, Pierre-Louis
  Lions, and Nizar Touzi.
\newblock Applications of {M}alliavin calculus to {M}onte {C}arlo methods in
  finance.
\newblock {\em Finance Stoch.}, 3(4):391--412, 1999.

\bibitem{G06}
Jim Gatheral.
\newblock {\em The Volatility Surface: A Practitioner's Guide}.
\newblock Wiley, 2006.

\bibitem{GW69}
John H.~Welsch Gene H.~Golub.
\newblock Calculation of gauss quadrature rules.
\newblock {\em Mathematics of Computation}, 23(106):221--s10, 1969.

\bibitem{GK96}
Alan Genz and B.D. Keister.
\newblock Fully symmetric interpolatory rules for multiple integrals over
  infinite regions with gaussian weight.
\newblock {\em Journal of Computational and Applied Mathematics}, 71(2):299 --
  309, 1996.

\bibitem{GG03}
T.~Gerstner and M.~Griebel.
\newblock Dimension--adaptive tensor--product quadrature.
\newblock {\em Computing}, 71(1):65--87, 2003.

\bibitem{G07}
Thomas Gerstner.
\newblock Sparse grid quadrature methods for computational finance.
\newblock {\em Habilitation, University of Bonn}, 77, 2007.

\bibitem{G04}
P.~Glasserman.
\newblock {\em Monte Carlo Methods in Financial Engineering}.
\newblock Applications of mathematics : stochastic modelling and applied
  probability. Springer, 2004.

\bibitem{GKS10}
M.~Griebel, F.~Y. Kuo, and I.~H. Sloan.
\newblock The smoothing effect of integration in $\mathbb{R}^d$ and the {ANOVA}
  decomposition.
\newblock {\em Math. Comp.}, 82:383--400, 2013.
\newblock Also available as INS preprint No. 1007, 2010.

\bibitem{GKS14}
M.~Griebel, F.~Y. Kuo, and I.~H. Sloan.
\newblock {The ANOVA decomposition of a non-smooth function of infinitely many
  variables can have every term smooth}.
\newblock Submitted to Mathematics of Compuation. Also available as INS
  preprint No. 1403, 2014, 2014.

\bibitem{GKS15}
M.~Griebel, F.~Y. Kuo, and I.~H. Sloan.
\newblock {Note on ''The smoothing effect of integration in $\mathbb{R}^d$ and
  the {ANOVA} decomposition''}.
\newblock Submitted to Mathematics of Compuation. Also available as INS
  preprint No. 1513, 2015.

\bibitem{HH64}
J.~M. Hammersley and D.~C. Handscomb.
\newblock {\em Monte Carlo methods}.
\newblock Methuen, London, 1964.

\bibitem{HW08}
Florian Heiss and Viktor Winschel.
\newblock Likelihood approximation by numerical integration on sparse grids.
\newblock {\em Journal of Econometrics}, 144(1):62--80, 2008.

\bibitem{H11}
Markus Holtz.
\newblock {\em Sparse Grid Quadrature in High Dimensions with Applications in
  Finance and Insurance}.
\newblock Springer Verlag, Berlin, Heidelberg, 2011.

\bibitem{KdKKM04}
Martin Krekel, Johan de~Kock, Ralf Korn, and Tin-Kwai Man.
\newblock An analysis of some methods for pricing basket options.
\newblock {\em Wilmott}, pages 82--89, 2004.

\bibitem{LE09}
Pierre L'Ecuyer.
\newblock Quasi-{M}onte {C}arlo methods with applications in finance.
\newblock {\em Finance and Stochastics}, 13(3):307--349, 2009.

\bibitem{LS15}
Daniël Linders and Ben Stassen.
\newblock The multivariate variance gamma model: basket option pricing and
  calibration.
\newblock {\em Quantitative Finance}, 0(0):1--18, 0.

\bibitem{LS06}
Elisa Luciano and Wim Schoutens.
\newblock A multivariate jump-driven financial asset model.
\newblock {\em Quantitative Finance}, 6(5):385--402, 2006.

\bibitem{MCC98}
Dilip~B. {Madan}, Peter~P. {Carr}, and Eric~C. {Chang}.
\newblock {The variance gamma process and option pricing.}
\newblock {\em {Eur. Finance Rev.}}, 2(1):79--105, 1998.

\bibitem{Nie92}
H.~Niederreiter.
\newblock {\em Random Number Generation and Quasi-{M}onte {C}arlo Methods}.
\newblock Society for Industrial and Applied Mathematics, Philadelphia, PA,
  1992.

\bibitem{RT97}
Marc {Romano} and Nizar {Touzi}.
\newblock {Contingent claims and market completeness in a stochastic volatility
  model.}
\newblock {\em {Math. Finance}}, 7(4):399--412, 1997.

\bibitem{Smo63}
S.~Smolyak.
\newblock {Quadrature and interpolation formulas for tensor products of certain
  classes of functions}.
\newblock {\em Doklady Akademii Nauk SSSR}, 4:240--243, 1963.

\bibitem{Sob67}
I.M Sobol'.
\newblock On the distribution of points in a cube and the approximate
  evaluation of integrals.
\newblock {\em USSR Computational Mathematics and Mathematical Physics},
  7(4):86 -- 112, 1967.

\end{thebibliography}

\end{document}